\newcommand{\vone}{\mathrm{Out}^1}
\newcommand{\vtwo}{\mathrm{Out}^2}
\newcommand{\vl}{\mathrm{Out}^1_\mathrm{L}}
\newcommand{\vr}{\mathrm{Out}^1_\mathrm{R}}
\newcommand{\vlr}{\mathrm{Out}^1_\mathrm{LR}}
\newcommand{\vm}{\mathrm{Out}^1_\mathrm{M}}
\title{On Planar Straight-Line Dominance Drawings\footnote{A preliminary version of this paper appeared at WADS '25 \cite{DBLP:conf/wads/AngeliniBBF0O25}. Sections \ref{sse:deep} and \ref{se:span-2}, and thus Theorems \ref{th:deep-3-trees} and \ref{th:span2}, are new to this version.}}
\author{Patrizio Angelini}{John Cabot University, Rome, Italy}{pangelini@johncabot.edu}{}{}
\author{Michael A. Bekos}{University of Ioannina, Ioannina, Greece}{bekos@uoi.gr}{}{}
\author{Giuseppe {Di Battista}}{Universit\`a degli Studi Roma Tre, Rome, Italy}{giuseppe.dibattista@uniroma3.it}{}{Supported by the European Union, Next Generation EU, Mission~4, Component 1, CUP C53D23003680006 PRIN project no.\ 2022TS4Y3N ``EXPAND: scalable algorithms for EXPloratory Analyses of heterogeneous and dynamic Networked Data''.}
\author{Fabrizio Frati}{Universit\`a degli Studi Roma Tre, Rome, Italy}{fabrizio.frati@uniroma3.it}{}{Supported by the European Union, Next Generation EU, Mission~4, Component 1, CUP J53D23007130006 PRIN proj.\ 2022ME9Z78 ``NextGRAAL: Next-generation algorithms for constrained GRAph visuALization''.}
\author{Luca Grilli}{Universit\`a degli Studi di Perugia, Perugia, Italy}{luca.grilli@unipg.it}{}{Supported by MUR PNRR project SERICS (COVERT: CUP\_J93C23002310006), funded by the European Union – Next Generation EU}
\author{Giacomo Ortali}{Universit\`a degli Studi di Perugia, Perugia, Italy}{giacomo.ortali@unipg.it}{}{}
\authorrunning{P. Angelini, M. Bekos, G. Di Battista, F. Frati, L. Grilli, G. Ortali} 
\keywords{$st$-graphs, dominance drawings, planar straight-line drawings, upward planarity}
\crefname{property}{Property}{Properties}
\begin{document}

\maketitle
\begin{abstract}
We study the following question, which has been considered since the 90's: Does every $st$-planar graph admit a planar straight-line dominance drawing? We show concrete evidence for the difficulty of this question, by proving that, unlike upward planar straight-line drawings, planar straight-line dominance drawings with prescribed $y$-coordinates do not always exist and planar straight-line dominance drawings cannot always be constructed via a contract-draw-expand inductive approach. We also show several classes of $st$-planar graphs that always admit a planar straight-line dominance drawing. These include $st$-planar $3$-trees in which every stacking operation introduces two edges incoming into the new vertex, $st$-planar graphs in which every vertex is adjacent to the sink, $st$-planar graphs in which no face has the left boundary that is a single edge, and $st$-planar graphs that have a leveling with span at most two. 
\end{abstract}

\section{Introduction}

Drawings of directed graphs are an evergreen research topic in the graph drawing literature. Early papers on the subject go back to the 80's~\cite{DBLP:journals/tsmc/BattistaN88,DBLP:journals/tcs/BattistaT88,DBLP:journals/tsmc/SugiyamaTT81} and the number of papers on the topic published since 2023 is in double digits~\cite{DBLP:conf/gd/AlegriaCLDBFGP24,acd-upse-25,angelini_et_al:LIPIcs.GD.2024.13,DBLP:journals/dmtcs/AngeliniBCCLHLPPR25,DBLP:journals/jgaa/AngeliniCCL24,DBLP:conf/gd/BekosLFGMR22,DBLP:journals/tcs/BekosLFGMR23,DBLP:journals/ejc/BhoreLMN23,DBLP:journals/algorithmica/BinucciLGDMP23,DBLP:journals/jgaa/BinucciDP23,DBLP:journals/cgt/ChaplickCCLNPT023,DBLP:conf/latin/GiacomoFKMMSV24,DBLP:journals/tcs/Frati24,DBLP:conf/gd/JansenKKLMS23,DBLP:conf/focs/JungeblutMU23,DBLP:journals/siamdm/JungeblutMU23,DBLP:journals/jgaa/KlawitterZ23,loffler:LIPIcs.GD.2024.47,DBLP:conf/gd/NollenburgP23}. From an applicative perspective, many domains require techniques for visualizing directed graphs, such as visualization tools for biological networks and SIEM systems for cyber threat intelligence. Many standards for drawing directed graphs have been defined, and in most of them the drawing is \emph{upward}, i.e., each edge is represented by a Jordan arc whose $y$-coordinates monotonically increase from the tail to the head of the edge. Di Battista and Tamassia~\cite{DBLP:journals/tcs/BattistaT88} proved that every \emph{upward planar graph} (that is, a directed graph that admits an upward planar drawing) admits an upward planar \emph{straight-line} drawing, a result analogous to Fary's celebrated result about the geometric realizability of planar graphs~\cite{f-srpg-48}. In order to prove the geometric realizability of upward planar graphs, it suffices to look at upward planar graphs whose faces are delimited by $3$-cycles. Indeed, every upward planar graph is a subgraph of an \emph{$st$-planar graph}~\cite{DBLP:journals/tcs/BattistaT88} (that is, an upward planar graph with a single source $s$ and a single sink $t$), which in turn is a subgraph of a \emph{maximal $st$-planar graph}~\cite{DBLP:journals/tcs/BattistaT88} (that is, an $st$-planar graph to which no edge can be added without losing simplicity or upward planarity).

One of the easiest algorithms, if not \emph{the} easiest algorithm, for constructing upward planar straight-line drawings is due to Di Battista, Tamassia, and Tollis~\cite{DBLP:journals/dcg/BattistaTT92}. This algorithm assigns $x$- and $y$-coordinates to the vertices simply by performing two pre-order traversals of the input $st$-planar graph. Moreover, the algorithm constructs upward planar straight-line drawings that are actually \emph{dominance drawings}. These are \emph{$xy$-monotone drawings} (that is, each edge is represented by a Jordan arc whose $x$- and $y$-coordinates monotonically increase from the tail to the head of the edge) such that, for any pair of vertices $u,v$, there exists a directed path from $u$ to $v$ in the graph if and only if $x(u)\leq x(v)$ and $y(u)\leq y(v)$ hold in the drawing. Dominance drawings constitute an interesting graph drawing style because they express the reachability between vertices by their dominance relationship, i.e., by the coordinates assigned to them; this allows one to answer reachability queries in constant time, see, e.g,~\cite{DBLP:journals/sncs/LionakisOT21,DBLP:conf/edbt/VelosoCJZ14}. For more about dominance drawings, see, e.g.,~\cite{DBLP:conf/analco/BannisterDE14,DBLP:journals/ijcga/BertolazziCBTT94,DBLP:conf/cccg/ElGindyHLMRW93,DBLP:conf/gd/KornaropoulosT12a,ot-mddta-21,DBLP:conf/sofsem/OrtaliT23,DBLP:journals/tsmc/SugiyamaTT81}. Figure~\ref{fig:introduction-drawings} shows planar straight-line drawings of an $st$-planar graph that are non-upward, upward (and not $xy$-monotone), $xy$-monotone (and not dominance), and dominance.

\begin{figure}[ht]
    \centering
    \begin{subfigure}{0.24\linewidth}
        \centering
        {\includegraphics[page=1]{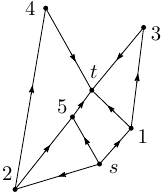}}
        \subcaption{Non-upward}\label{fig:introduction-drawings-a}
    \end{subfigure}
    \begin{subfigure}{0.24\linewidth}
        \centering
        {\includegraphics[page=2]{img/introduction-drawings2.pdf}}
        \subcaption{Upward}\label{fig:introduction-drawings-b}
    \end{subfigure}
    \begin{subfigure}{0.24\linewidth}
        \centering
        {\includegraphics[page=3]{img/introduction-drawings2.pdf}}
        \subcaption{$xy$-monotone}\label{fig:introduction-drawings-c}
    \end{subfigure}
    \begin{subfigure}{0.24\linewidth}
        \centering
        {\includegraphics[page=4]{img/introduction-drawings2.pdf}}
        \subcaption{Dominance}\label{fig:introduction-drawings-d}
    \end{subfigure}
    \caption{Four planar straight-line drawings of an $st$-planar graph $G$. (a) A non-upward drawing. (b) An upward drawing. (c) An $xy$-monotone drawing. (d) A dominance drawing.}
    \label{fig:introduction-drawings}
\end{figure}

Di Battista, Tamassia, and Tollis's algorithm~\cite{DBLP:journals/dcg/BattistaTT92}  does not actually construct an upward planar straight-line drawing of every $st$-planar graph. Indeed, it may construct a non-planar drawing if the input $st$-planar graph contains \emph{transitive} edges, where an edge is transitive if the graph contains a directed path from the tail to the head of the edge. By subdividing each transitive edge with a new vertex, their algorithm constructs a planar dominance drawing of any $st$-planar graph in which each edge is either a straight-line segment (if it is non-transitive) or a $1$-bend polyline (if it is transitive). Whether this bend per edge can be eliminated by designing an algorithm different from the one in~\cite{DBLP:journals/dcg/BattistaTT92} is the question we study in this paper. 

Formally, we ask: Does every $st$-planar graph admit a planar straight-line dominance drawing? Apart from the $st$-planar graphs without transitive edges, the question is known to have a positive answer for \emph{series-parallel digraphs}~\cite{DBLP:journals/ijcga/BertolazziCBTT94}. We prove the following results.

\begin{itemize}
\item In Section~\ref{se:barriers}, we prove a remarkable difference between dominance and upward drawings. We revisit the two main approaches for the construction of upward planar straight-line drawings of $st$-planar graphs and prove that they cannot be successfully applied to construct planar straight-line dominance drawings. The first approach~\cite{DBLP:journals/tcs/BattistaT88} contracts an internal edge of the graph, constructs a drawing inductively, and then expands the previously contracted edge to be a ``short'' segment. We show that there exist $st$-planar graphs in which no edge can be used in the contract-draw-expand approach so to get a planar straight-line dominance drawing. The second method~\cite{DBLP:journals/algorithmica/EadesFLN06,DBLP:journals/jda/HongN10,DBLP:journals/jgt/PachT04} prescribes the $y$-coordinates of the vertices, so that the tail of any edge is assigned a smaller $y$-coordinate than its head. This additional constraint on the drawing allows for easier recursive schemes for its construction. We prove that planar straight-line dominance drawings with prescribed $y$-coordinates do not always exist. We believe that these results provide solid evidence for the difficulty of constructing planar straight-line dominance drawings.
\item In Section~\ref{se:$3$-trees}, we study $st$-planar graphs whose underlying graph is a planar $3$-tree. Planar $3$-trees, also known as stacked triangulations and Apollonian networks, constitute a common benchmark for planar graph drawing problems, as they allow for easy inductive constructions; for example, every planar $3$-tree with at least four vertices can be constructed by ``stacking'' a vertex inside a face of a smaller planar $3$-tree. For our question, the study of $st$-planar $3$-trees turns out to be complicated, as inductive drawing constructions do not cope well with the dominance relationship that needs to be ensured between  vertices that are ``far away'' in the graph. We show how to construct planar straight-line dominance drawings for three classes of $st$-planar $3$-trees. One of them has a constraint on the orientation, namely every stacking operation introduces two edges incoming into the new vertex. The other two classes have constraints on the graph structure, namely in one of them each stacking operation happens in a face that was created by the last stacking operation, while in the other one every stacking operation happens in a face incident to the sink. The latter graph class coincides with the maximal  $st$-planar graphs in which the sink is adjacent to every vertex.
\item In Section~\ref{se:non-transitive}, we improve the mentioned result by Di Battista, Tamassia, and Tollis~\cite{DBLP:journals/dcg/BattistaTT92}, by proving that a planar straight-line dominance drawing always exists for any $st$-planar graph in which every transitive edge is to the right of every directed path from the tail to the head of the edge. This result is obtained via an ear decomposition of the graph. This shows that the problem of constructing planar straight-line dominance drawings is made difficult by the interaction between ``left transitive edges'' and ``right transitive edges''. 
\item In Section~\ref{se:span-2}, we show that every $st$-planar graph with only ``short'' edges admits a planar straight-line dominance drawing. A natural measure of length for the edges of an $st$-planar graph is called \emph{span} and is defined as follows. An \emph{$st$-planar level graph} consists of an $st$-planar graph $G=(V,E)$ together with a function $\ell: V\rightarrow \{1,2,\dots,k\}$, for some integer $k$, such that $G$ admits an upward planar drawing in which $y(u)=\ell(u)$, for every vertex $u\in V$. Observe that every $st$-planar graph can be enhanced to an $st$-planar level graph by defining a suitable function $\ell$. The \emph{span} of an $st$-planar level graph $G$ is the maximum value $\ell(v)-\ell(u)$, among all the edges $(u,v)$ of $G$. Note that $st$-planar level graphs with span~one do not have transitive edges, hence they admit planar straight-line dominance drawings~\cite{DBLP:journals/dcg/BattistaTT92}. We show that $st$-planar level graphs with span~two also admit planar straight-line dominance drawings.

\end{itemize}


All our algorithms construct drawings whose resolution is exponentially small (or worse). This drawback is sometimes necessary for upward planar straight-line drawings \cite{DBLP:journals/dcg/BattistaTT92}, and hence also for planar straight-line dominance drawings. However, for most of the graph classes we considered, we do not know whether an exponentially-small resolution is actually required in order to construct planar straight-line dominance drawings.


\section{Preliminaries} \label{se:preliminaries}

A \emph{drawing} of a graph maps each vertex to a distinct point in the plane and each edge to a Jordan arc between its endpoints. A drawing is \emph{straight-line} if each edge is represented by a straight-line segment and \emph{planar} if no two edges intersect, except at common endpoints. Two planar drawings of a connected graph are \emph{plane-equivalent} if they define the same clockwise order of the edges incident to each vertex and the same clockwise order of the vertices and edges along the boundary of the outer face. A \emph{plane embedding} is an equivalence class of planar drawings and a \emph{plane graph} is a graph with a plane embedding. Whenever we talk about planar drawings of a plane graph, we always assume that they are in the equivalence class associated with the plane graph. A \emph{face} is a connected region of the plane defined by a planar drawing. Bounded faces are \emph{internal}, while the unbounded face is the \emph{outer face}. Vertices incident to the outer face are \emph{external}, while the other vertices are \emph{internal}. We often talk about faces of a plane embedding or of a plane graph, implicitly referring to any planar drawing in the corresponding equivalence class.

An \emph{$st$-plane graph} is an $st$-planar graph with a plane embedding (for its underlying graph) in which $s$ and $t$ are incident to the outer face. An $st$-plane graph is \emph{maximal} if no edge can be added to it while maintaining it an $st$-plane graph. Since every $st$-planar graph can be augmented (by adding vertices and edges) to maximal without altering the reachability between vertices~\cite{DBLP:journals/tcs/BattistaT88}, the existence of a planar straight-line dominance drawing for all $st$-planar graphs can be decided by only looking at maximal $st$-planar graphs. Two vertices in a directed graph are \emph{incomparable} if no directed path goes from any of the vertices to the other one, that is, neither is \emph{reachable} from the other.

All our algorithms, except for the one presented in the proof of Theorem~\ref{th:span2}, construct planar straight-line dominance drawings in which no two vertices share the same $x$- or $y$-coordinates. The next lemma shows that, in fact, equality between coordinates is not needed for constructing planar straight-line dominance drawings.

\begin{lemma} \label{le:strict}
If a directed graph admits a planar straight-line dominance drawing, it also admits a planar straight-line dominance drawing in which no two vertices share the same $x$- or $y$-coordinate.
\end{lemma}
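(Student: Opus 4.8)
The plan is to take a planar straight-line dominance drawing $\Gamma$ of the given directed graph $G$ and perturb the vertex coordinates so as to break all ties, while preserving both planarity and the dominance property. The key point to exploit is that the dominance property is \emph{combinatorial}: it only depends on the relative order of the $x$-coordinates and of the $y$-coordinates of the vertices, i.e., on the two linear (pre)orders induced by the projections onto the two axes. So if two vertices $u,v$ share, say, an $x$-coordinate in $\Gamma$, then since $\Gamma$ is a dominance drawing there is no directed path from $u$ to $v$ nor from $v$ to $u$ (if there were a path from $u$ to $v$, then $x(u)\le x(v)$ and $y(u)\le y(v)$ must hold, and conversely $x(v)\le x(u)$ and $y(v)\le y(u)$ cannot hold simultaneously unless the two points coincide, which is impossible; more directly, a directed path from $u$ to $v$ forces the edges along it to have monotone increasing $x$ and hence $x(u)<x(v)$ since no two adjacent vertices can share an $x$-coordinate in a planar straight-line dominance drawing). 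Hence $u$ and $v$ are incomparable, and we have freedom to decide which one gets the slightly smaller $x$-coordinate.

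First I would handle the $x$-coordinates and the $y$-coordinates independently, say first the $x$-coordinates. Group the vertices into maximal sets sharing a common $x$-value; within each such group the vertices are pairwise incomparable (by the argument above). Pick an arbitrarily small $\varepsilon>0$ — small enough that the perturbation does not move any vertex past the $x$-coordinate of any other vertex with a different $x$-value, and small enough (together with a corresponding bound for the $y$-perturbation) that no edge crosses another edge and no vertex lands on an edge; such an $\varepsilon$ exists by a standard compactness/continuity argument, since planarity of a straight-line drawing is an open condition on the vertex positions. Within a group of $m$ vertices sharing an $x$-coordinate, I would assign them distinct $x$-coordinates in an arithmetic progression of step $\varepsilon/m$ centered at the original value; the order among them can be chosen arbitrarily, since they are pairwise incomparable so no directed path imposes a constraint. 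After this step no two vertices share an $x$-coordinate, and for every edge $(u,v)$ we still have $x(u)<x(v)$: adjacent vertices never shared an $x$-coordinate, so the ordering between $x(u)$ and $x(v)$ is unchanged by a sufficiently small perturbation. Then repeat the symmetric procedure for the $y$-coordinates.

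It remains to check that the resulting drawing $\Gamma'$ is still a planar straight-line dominance drawing. Planarity holds by the choice of $\varepsilon$. For the dominance property, observe that the two linear orders of the vertices induced by the $x$- and by the $y$-coordinates in $\Gamma'$ are \emph{refinements} of the respective (pre)orders in $\Gamma$: the relative order of any two vertices with distinct $x$-coordinates (resp.\ $y$-coordinates) in $\Gamma$ is preserved, and ties are broken. Now take any two vertices $u,v$. If $G$ contains a directed path from $u$ to $v$, then $x(u)\le x(v)$ and $y(u)\le y(v)$ in $\Gamma$, and in fact $u$ and $v$ differ in both coordinates already in $\Gamma$ (the path contains at least one edge, so $x(u)<x(v)$ by the adjacency argument; similarly $y(u)<y(v)$ — wait, the path might be a single edge, still fine — more carefully, along the path each consecutive pair differs in both $x$ and $y$, so strict inequalities hold), hence the same strict inequalities hold in $\Gamma'$. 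Conversely, if there is no directed path from $u$ to $v$, then in $\Gamma$ we do \emph{not} have both $x(u)\le x(v)$ and $y(u)\le y(v)$, i.e., either $x(u)>x(v)$ or $y(u)>y(v)$; a strict inequality in $\Gamma$ is preserved in $\Gamma'$, so we still do not have $x(u)\le x(v)\wedge y(u)\le y(v)$ in $\Gamma'$. Thus $\Gamma'$ is a dominance drawing.

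The main obstacle I anticipate is the bookkeeping around the one subtle point: that \emph{before} perturbing, adjacent vertices already differ in both coordinates (so that edges impose no constraints on how ties are broken, and so that the strict inequalities along directed paths survive). This is where one uses that $\Gamma$ is not just any drawing but a \emph{planar straight-line} dominance drawing: if an edge $(u,v)$ had $x(u)=x(v)$, the segment would be vertical and, combined with the dominance requirement $y(u)<y(v)$, one still needs to rule out that this causes trouble — actually a vertical edge is fine for dominance but then ties among a group are not all incomparable, so one must argue more carefully that the tie-breaking order can still be chosen consistently. The cleanest route, which I would follow, is: in a group of vertices sharing an $x$-coordinate, the \emph{only} comparabilities are along vertical edges, and these already form a linear order on the group (any two comparable vertices in the group are joined by a directed path all of whose vertices lie in the group, since $x$ is nondecreasing and bounded by the common value, and such paths are nested); so break ties respecting that linear order, and everything goes through. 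I expect verifying this nesting/consistency claim to be the only place requiring genuine care; the rest is the standard "perturb into general position" argument.
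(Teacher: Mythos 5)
Your overall plan---perturb the coordinates so that the two coordinate orders are refined, ties being broken consistently with reachability---is sound and is essentially the same perturbation idea as the paper's. The paper organizes it as an induction on the number of tied pairs, each step moving a single carefully chosen vertex (the endpoint of a maximal directed path all of whose vertices share an $x$- or $y$-coordinate), whereas you spread out each tied group at once; either organization works, and your ``refinement of the two preorders'' observation correctly captures why dominance survives.

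However, the central claim in the body of your argument is exactly backwards, and you only repair it in your final paragraph. You assert that two vertices $u,v$ with $x(u)=x(v)$ must be incomparable. The opposite is true: the dominance condition is an ``if and only if'', so $x(u)=x(v)$ together with, say, $y(u)<y(v)$ gives $x(u)\le x(v)$ and $y(u)\le y(v)$, which \emph{forces} a directed path from $u$ to $v$. Hence a maximal group of vertices sharing an $x$-coordinate is a chain, totally ordered by reachability, and that order coincides with the $y$-order on the group (this is precisely the structural fact the paper's proof exploits: such a group lies on a directed path of vertical edges). Consequently the tie-breaking order within a group is not ``arbitrary'', as you first claim---it is forced---and your auxiliary claim that adjacent vertices never share a coordinate is also false, since vertical and horizontal edges are perfectly legal in a dominance drawing; eliminating them is part of what the lemma is for. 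Your last paragraph does identify the problem and proposes the correct fix: break ties within each group consistently with the (linear) reachability order, i.e., with the $y$-order for an $x$-tied group. With that fix in place the refinement argument for dominance and the openness argument for planarity both go through, so the repaired proof is correct; but as written, the main body would not survive without that final correction, so make the corrected version the argument rather than an afterthought.
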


\begin{proof}
Let $\Gamma$ be a planar straight-line dominance drawing of a directed graph $G$. We construct the desired drawing by induction on the number $p$ of pairs of vertices that share their $x$- or $y$-coordinates in $\Gamma$. In the base case, $p=0$ and $\Gamma$ is the desired drawing. Suppose now that $p>0$. Since $\Gamma$ is a dominance drawing, for any two vertices with the same $x$-coordinate there exists a directed path from one to the other all of whose vertices have the same $x$-coordinate, and similar for two vertices with the same $y$-coordinate. The last edge $(u,v)$ of a maximal directed path whose vertices have the same $x$- or $y$-coordinates is such that either: 
\begin{itemize}
\item $x(u)=x(v)$, $y(u)<y(v)$, and for any successor $w$ of $v$ we have $x(w)> x(v)$, or
\item $y(u)=y(v)$, $x(u)<x(v)$, and for any successor $w$ of $v$ we have $y(w)> y(v)$.
\end{itemize}
Suppose we are in the former case, as the discussion for the latter case is symmetric. We can then increment the $x$-coordinate of $v$ by a sufficiently small amount $\varepsilon>0$ so that: (i) $\Gamma$ remains planar; the fact that each vertex position can be perturbed (actually in any direction) while maintaining planarity is a standard argument, see e.g.~\cite{DBLP:journals/tcs/BattistaT88}; (ii) $\Gamma$ remains a dominance drawing; for this, it suffices to choose $\varepsilon$ sufficiently small so that no vertex $w\neq v$ has $x$-coordinate in the interval $(x(u),x(u)+\varepsilon]$. In particular, since by assumption no edge outgoing from $v$ is vertical, all such edges maintain a positive slope. In the resulting drawing~$\Gamma$, the number of pairs of vertices that share their $x$ or $y$-coordinates is less than $p$, hence this completes the induction.
\end{proof}

As a warm-up result, we prove that every Hamiltonian $st$-planar graph has a planar straight-line dominance drawing. A directed graph is \emph{Hamiltonian} if it contains a directed path $(v_1=s,v_2,\dots,v_n=t)$, where $\{v_1,v_2,\dots,v_n\}$ is the vertex set of the graph. 

\begin{theorem} \label{th:hamiltonian}
Hamiltonian $st$-planar graphs admit planar straight-line dominance drawings.
\end{theorem}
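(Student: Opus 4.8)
The plan is to draw $G$ with its vertices on a \emph{staircase}, i.e., with $x(v_1)<\dots<x(v_n)$ and $y(v_1)<\dots<y(v_n)$, where $(v_1=s,\dots,v_n=t)$ is the directed Hamiltonian path. A staircase placement makes the dominance property automatic: since $G$ is acyclic and Hamiltonian, $v_i$ reaches $v_j$ exactly when $i\le j$, every edge $(v_a,v_b)$ of $G$ satisfies $a<b$ and is therefore drawn as an $xy$-increasing segment, and $x(v_i)\le x(v_j)$ together with $y(v_i)\le y(v_j)$ hold precisely when $i\le j$. So the entire task reduces to choosing the staircase so that no two edges cross. To organize the edges I would use that, since $G$ is $st$-planar, the edge $v_1v_n$ can be routed through the outer face, making $G+v_1v_n$ a Hamiltonian plane graph; drawing the Hamiltonian cycle as a closed curve and separating the remaining edges according to whether they are inside or outside it yields a two-page book embedding of $G$ with spine order $v_1,\dots,v_n$. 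Let $\mathcal{L}$ and $\mathcal{R}$ be the two pages; they partition the \emph{long} edges (those $(v_a,v_b)$ with $b\ge a+2$), and by planarity no two edges of the same page interleave, so $\mathcal{L}$ and $\mathcal{R}$ are laminar families of subintervals of $\{1,\dots,n\}$. The idea is to draw $\mathcal{L}$-edges above, and $\mathcal{R}$-edges below, the drawn path.

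Concretely, I would reduce everything to finding a staircase placement such that (a) for every $(v_a,v_b)\in\mathcal{L}$, each of $v_{a+1},\dots,v_{b-1}$ lies strictly below the line through $v_a$ and $v_b$, and (b) for every $(v_c,v_d)\in\mathcal{R}$, each of $v_{c+1},\dots,v_{d-1}$ lies strictly above the line through $v_c$ and $v_d$. From (a) and (b) I would then derive planarity by a routine case check that no two edges cross: two path edges have disjoint $x$-ranges or share a vertex; if the interval of an edge $e$ is nested in that of an $\mathcal{L}$-edge, then by (a) the endpoints of $e$, and hence $e$ itself, lie weakly below the supporting line of that $\mathcal{L}$-edge, so they do not cross, and symmetrically for $\mathcal{R}$ and (b); edges with disjoint intervals separate by $x$-range; and for two interleaving edges $(v_a,v_b)\in\mathcal{L}$ and $(v_c,v_d)\in\mathcal{R}$, applying (a) and (b) to the two endpoints that lie in each other's interiors shows that the segments $v_av_b$ and $v_cv_d$ are strictly separated (one above the other) at both ends of their common $x$-range, hence, both being straight, throughout it.

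The remaining, and main, difficulty is to actually produce a staircase placement satisfying (a) and (b). I would fix the $x$-coordinates first, as a sufficiently fast-growing sequence $x(v_i)=X_i$. This makes each requirement in (a)--(b) a linear inequality in the unknowns $y(v_1),\dots,y(v_n)$ (the requirement ``$v_m$ below the line through $v_a,v_b$'' becomes ``$y(v_m)$ smaller than the convex combination $\frac{X_b-X_m}{X_b-X_a}\,y(v_a)+\frac{X_m-X_a}{X_b-X_a}\,y(v_b)$'', and dually for (b)); moreover, since chords of a widely spread point set are almost horizontal at the height of their left endpoint, the requirements in (b) hold as soon as the consecutive $y$-gaps are not too small, whereas those in (a) force each interior vertex of an $\mathcal{L}$-edge to sit in a tiny window just above its left endpoint. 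I would then assign the $y(v_i)$ recursively down the $\mathcal{L}$-forest, nesting the windows of deeper $\mathcal{L}$-edges geometrically while keeping every gap above the mild lower bounds imposed by (b). Balancing these two kinds of constraints, i.e., choosing the $X_i$ and the window sizes so that the (a)-windows always leave enough room to clear the (b)-thresholds, which is exactly where the laminarity of $\mathcal{L}$ and $\mathcal{R}$ (and the fact that $G$ is simple, so no edge belongs to both $\mathcal{L}$ and $\mathcal{R}$) is used, is the technical heart of the argument and the reason the resulting drawing has exponentially small (or worse) resolution.
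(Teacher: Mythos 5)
Your key observation---that for a Hamiltonian $st$-planar graph the reachability relation is exactly the total order given by the Hamiltonian path, so that \emph{any} planar staircase placement (i.e., $x(v_1)<\dots<x(v_n)$ and $y(v_1)<\dots<y(v_n)$) is automatically a dominance drawing---is precisely the insight the paper exploits. From there, however, the two arguments diverge sharply. The paper obtains a planar staircase placement in two lines: take any upward planar straight-line drawing of $G$ (which exists by Di Battista and Tamassia), stretch it vertically until every edge has slope in $(45^{\circ},135^{\circ})$, and rotate clockwise by $45^{\circ}$; these maps are affine, hence preserve planarity, and afterwards every edge points into the open first quadrant, so the Hamiltonian path forces the staircase order on the vertices. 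You instead rebuild the drawing from scratch via a two-page book embedding along the Hamiltonian cycle. Your reduction to conditions (a)--(b) and the case analysis deriving planarity from them are sound (laminarity of each page, plus the observation that two segments strictly separated at both ends of their common $x$-range cannot cross, covers all cases). The problem is that the step you yourself call ``the technical heart''---actually exhibiting coordinates $X_i$ and $y(v_i)$ satisfying (a) and (b) simultaneously---is only a plan: ``nesting the windows of deeper $\mathcal{L}$-edges geometrically while keeping every gap above the mild lower bounds imposed by (b)'' and ``balancing these two kinds of constraints'' assert that a delicate system of nested upper bounds and gap lower bounds is consistent without verifying it, and this is exactly where all the difficulty of your route is concentrated. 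I believe the construction can be completed, but as written the proof is not finished; moreover, the existence of the very placement you are after follows immediately from the stretch-and-rotate argument, which would let you discard the entire coordinate-assignment paragraph (and the book-embedding machinery with it).
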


\begin{proof}
Consider a Hamiltonian $st$-planar graph $G$. Construct an upward planar straight-line drawing $\Gamma$ of $G$; this always exists~\cite{DBLP:journals/tcs/BattistaT88}. Stretch $\Gamma$ vertically, so that the slope of every edge is in the range $(45^{\circ}, 135^{\circ})$. Now rotate $\Gamma$ in clockwise direction by $45^{\circ}$. Since the slope of every edge is now in the range $(0^{\circ}, 90^{\circ})$, we have that $\Gamma$ is $xy$-monotone. Since vertical stretch and rotation are affine transformations, $\Gamma$ is planar, as well. Finally, since $G$ contains a Hamiltonian path $(v_1,\dots,v_n)$, vertex $v_j$ is a successor of vertex $v_i$, for every $1\leq i<j\leq n$. Since the slope of the edge $(v_k,v_{k+1})$ is in the range $(0^{\circ}, 90^{\circ})$, for $k=i,\dots,j-1$, vertex $v_j$ is in the first quadrant of vertex $v_i$, hence $\Gamma$ is a dominance drawing.  
\end{proof}


\section{Planar Straight-line Dominance Drawings are Difficult to Get} \label{se:barriers}

In this section,  we revisit the two main approaches for the construction of upward planar straight-line drawings of $st$-planar graphs and prove that they cannot be enhanced (or at least not in a direct way) to construct planar straight-line dominance drawings.


\subsection{Constructing Drawings via Contractions and Expansions}\label{subse:fary}

Di Battista and Tamassia~\cite{DBLP:journals/tcs/BattistaT88} first proved that every $st$-plane graph admits an upward planar straight-line drawing. 
Their proof extends to directed graphs a well-known proof by F\'ary~\cite{f-srpg-48}, showing that every (undirected) plane graph admits a planar straight-line drawing. We briefly describe the algorithm by Di Battista and Tamassia~\cite{DBLP:journals/tcs/BattistaT88}.

An internal edge $(u,v)$ of a maximal $st$-plane graph $G$ is \emph{contractible} if it satisfies the following conditions: (1) The vertices $u$ and $v$ have exactly two common neighbors, denoted by $z_1$ and $z_2$; note that the cycles $(u,v,z_1)$ and $(u,v,z_2)$ delimit internal faces of $G$. (2) For $i=1,2$, the edges connecting $u$ and $v$ with $z_i$ are both incoming or both outgoing at $z_i$. 

The \emph{contraction} of a contractible edge $(u,v)$ constructs a graph $G'$ by identifying $u$ and $v$ into a vertex $w$ with the following adjacencies (see Fig \ref{fig:fary-contraction}). For every neighbor $z\notin \{u,v,z_1,z_2\}$ of $u$ (of $v$), we have that $G'$ contains an edge between $w$ and $z$, which is outgoing at~$z$ if and only if the edge between $u$ and $z$ (resp.\  between $v$ and $z$) is outgoing at~$z$. Also, for $i=1,2$, we have that $G'$ contains an edge between $w$ and~$z_i$, which is outgoing at~$z_i$ if and only if the edges connecting $u$ and $v$ with $z_i$ are both outgoing at $z_i$. It is easy to see that $G'$ is a maximal $st$-plane graph.

The core of Di Battista and Tamassia's algorithm lies in the following two statements\footnote{Di Battista and Tamassia's proof actually distinguishes the case in which $G$ contains a \emph{separating triangle} (a $3$-cycle with vertices in its interior) from the case in which it does not, performing different constructions in the two cases. However, the first case is unnecessary, as a contractible edge in an $st$-planar graph can always be found, similarly to what was noted by Wood~\cite{DBLP:journals/corr/abs-cs-0505047} for undirected graphs.}: (i) every maximal $st$-plane graph $G$ has a contractible edge $(u,v)$, whose contraction results in a maximal $st$-plane graph $G'$; and (ii) an upward planar straight-line drawing $\Gamma$ of $G$ can be obtained from an upward planar straight-line drawing $\Gamma'$ of $G'$ by \emph{expanding} $w$, that is, by replacing $w$ with a sufficiently small segment (with a suitable slope) representing $(u,v)$.  

\begin{figure}[ht]
    \centering
    \begin{subfigure}{0.5\linewidth}
        \centering
        {\includegraphics[page=1]{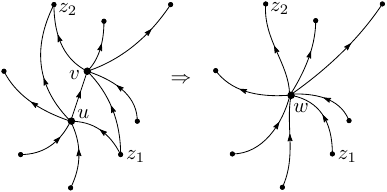}
        \subcaption{}\label{fig:fary-contraction}}
    \end{subfigure}
    \hfill
    \begin{subfigure}{0.4\linewidth}
        \centering
        {\includegraphics[scale=0.9,page=2]{img/Fary2.pdf}
        \subcaption{}\label{fig:fary-expansion}}
    \end{subfigure}    
    \caption{(a) The contraction of an edge $(u,v)$ in a maximal $st$-plane graph. (b) A maximal $st$-plane graph with no dominance-expandable edge. Thin edges are not contractible, while fat edges are contractible but not dominance-expandable; for example, $(1,3)$ is not dominance-expandable, because vertex $2$ is a predecessor of vertex $3$ but not a predecessor of vertex $1$.}
    \label{fig:fary}
\end{figure}

Since, depending on the geometric placement of the neighbors of $w$ in $\Gamma'$, the edge $(u,v)$ might need to be an arbitrarily small segment in $\Gamma$, in order for $\Gamma$ to be a dominance drawing we need $u$ and $v$ to have the same successors and predecessors. That is, let $\mathcal S(z)$ be the set of successors of a vertex $z$, that is, the set of all vertices $z'$ such that there exists a directed path from $z$ to $z'$. Analogously, let $\mathcal P(z)$ be the sets of predecessors of a vertex $z$. A contractible edge $(u,v)$ is \emph{dominance-expandable} if $\mathcal S(u)=\mathcal S(v) \cup \{v\}$ and $\mathcal P(v)=\mathcal P(u) \cup \{u\}$. Di Battista and Tamassia's approach could be enhanced to construct planar straight-line dominance drawings if every maximal $st$-plane graph contained a dominance-expandable edge. However, we can prove that there exist maximal $st$-plane graphs with no dominance-expandable edge, as the one in Fig~\ref{fig:fary-expansion}, which constitutes a barrier for this approach we cannot overcome. 

We remark that, for every graph class for which we can prove the existence of planar straight-line dominance drawings in the upcoming sections, there exist graphs in the class that do not have a dominance-expandable edge or such that the contraction of any dominance-expandable edge would result in a graph not in the same class.


\subsection{Constructing Drawings by Prescribing the $y$-Coordinates}\label{subse:prescribed}

Eades, Feg, Lin, and Nagamochi~\cite{DBLP:journals/algorithmica/EadesFLN06} and, independently,  Pach and T{\'{o}}th~\cite{DBLP:journals/jgt/PachT04} proved that every upward planar drawing can be straightened while preserving the $y$-coordinates of the vertices. This implies that every $st$-plane graph admits an upward planar straight-line drawing with prescribed $y$-coordinates (as long as these respect the reachability between vertices). This result was strengthened by Hong and Nagamochi~\cite{DBLP:journals/jda/HongN10}, who proved that every internally-triconnected $st$-plane graph admits an upward planar straight-line \emph{convex} drawing with prescribed $y$-coordinates and prescribed outer face. It is interesting that, while more constrained, drawings with prescribed $y$-coordinates (and a prescribed outer face) allow for an easier recursive construction.

We now show that, unlike upward planar straight-line drawings, planar straight-line dominance drawings with given $y$-coordinates do not always exist. 


\begin{theorem} \label{th:fixed-y-counter}
For every $n\geq 7$, there exists an $st$-planar graph $G_n$ with vertex set $\{v_1,v_2,\dots,v_n\}$ such that:
\begin{itemize}
	\item there exists a planar dominance drawing of $G_n$ such that $y(v_i)=i$, for $i=1,\dots,n$; and 
	\item there exists no planar straight-line dominance drawing of $G_n$ such that $y(v_i)=i$, for $i=1,\dots,n$.
\end{itemize}
\end{theorem}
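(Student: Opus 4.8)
The plan is to isolate a single small gadget $G_7$ and then obtain $G_n$ for $n>7$ from $G_7$ by subdividing a suitable edge $n-7$ times and relabelling the vertices by a topological order. Subdividing an edge leaves the reachability among the old vertices unchanged and inserts only vertices comparable exactly to the predecessors of the subdivided edge's tail, its successors, and one another; hence both required properties transfer from $G_7$ to $G_n$, provided the subdivided edge is chosen among the non-transitive edges that are not involved in the obstruction used below (subdividing a transitive edge would turn it into a polyline, which could be routed around obstacles and would thereby spoil the impossibility argument). So from now on I concentrate on $G_7$.

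For the first bullet I would simply exhibit a planar dominance drawing of $G_7$ with $y(v_i)=i$ in which each non-transitive edge is a straight segment while each transitive edge is a monotone polyline (e.g.\ as an explicit figure), and verify by inspection that the vertex coordinates realise exactly the reachability relation of $G_7$ as a dominance relation. Alternatively, one can start from an upward planar straight-line drawing of $G_7$ with the prescribed $y$-coordinates --- which exists since $\ell(v_i)=i$ is a valid level assignment --- and reroute the transitive edges as monotone polylines; conceptually, a bent edge is free to hug the boundary of its bounding box whereas a straight edge must be its diagonal, and it is precisely this rigidity that the second bullet exploits.

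The core is the second bullet. Suppose $G_7$ had a planar straight-line dominance drawing $\Gamma$ with $y(v_i)=i$. The key observation is that, with the $y$-coordinates fixed and pairwise distinct, the $x$-ordering of the vertices is \emph{completely forced}: running (the proof of) Lemma~\ref{le:strict} with the $y$-coordinates held fixed, we may assume all $x$-coordinates are distinct; and then, for $i<j$, since $y(v_i)=i<j=y(v_j)$, the dominance property gives that $v_i$ reaches $v_j$ iff $x(v_i)<x(v_j)$, and that $v_i,v_j$ are incomparable iff $x(v_i)>x(v_j)$ (note $v_j$ cannot reach $v_i$, as $y(v_j)=j>i=y(v_i)$). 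Thus the only freedom left in $\Gamma$ is the actual values of the $x$-coordinates, subject to this fixed total order, and $\Gamma$ must additionally realise one of the (few) plane embeddings of $G_7$. I would then derive a contradiction as follows: for a suitable family of transitive edges $e_1,\dots,e_k$ of $G_7$ and vertices $w_1,\dots,w_k$, where each $w_h$ necessarily lies in the interior of the bounding box of $e_h$ in $\Gamma$ (equivalently, $w_h$ is on a directed path from the tail to the head of $e_h$), planarity together with the fixed embedding forces the segment $e_h$ to pass on a prescribed side of $w_h$; expressing ``$w_h$ lies left (resp.\ right) of $e_h$'' via the linear interpolation of $e_h$ at the fixed height $y(w_h)$ turns each such requirement into a strict linear inequality among the $x$-coordinates of $w_h$ and of the endpoints of $e_h$, and since the configuration $e_1,w_1,e_2,w_2,\dots$ is cyclically interlocked, composing these inequalities yields $x(v_a)<x(v_b)<\dots<x(v_a)$, a contradiction. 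If $G_7$ is not $3$-connected, one fixes each of its plane embeddings in turn and runs this argument separately.

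The main obstacle --- and the genuinely creative step --- is designing $G_7$ so that this whole scheme closes up. The reachability poset of $G_7$ must have order dimension two and admit $1,\dots,7$ as one element of a compatible pair of realising linear extensions, since otherwise not even the curved drawing of the first bullet exists (this already constrains the gadget enough that it cannot be too small, consistent with the bound $n\ge 7$); and simultaneously the $x$-order forced by that pair must make the chosen transitive edges geometrically interlock in the cyclic way above. Balancing these two requirements --- too few incomparabilities leaves too much geometric slack and a straight-line drawing survives, too many make the forced $x$-order inconsistent and the first bullet fails --- together with checking that the interlocking is robust to the subdivision used to pass from $G_7$ to $G_n$, is where the real work lies; by contrast, once the right $G_7$ is in hand the two bullets reduce to finite checks plus the forced-order reduction above.
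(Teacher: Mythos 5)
There is a genuine gap: the entire content of this theorem is the construction of a concrete gadget together with a concrete geometric obstruction, and your proposal explicitly defers both (``designing $G_7$ so that this whole scheme closes up \dots is where the real work lies''). What you have correctly identified is the setup that the paper also uses: with distinct prescribed $y$-coordinates, the dominance condition forces the entire $x$-order of the vertices (e.g., two incomparable vertices must appear in reversed $x$-order), so the only remaining freedom is the metric placement. But your proposed contradiction mechanism is the wrong shape. A cyclic chain of strict inequalities $x(v_a)<x(v_b)<\dots<x(v_a)$ among vertex coordinates cannot arise, because the forced $x$-order is a consistent total order --- it is realized by the bent dominance drawing of the first bullet, so any obstruction expressible purely as an ordering of the $x$-coordinates is automatically satisfiable. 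The impossibility must exploit the actual \emph{values} of the prescribed $y$-coordinates, not just their order. This is exactly what the paper's argument does: in its $G_n$ (paths $(v_1,v_2,v_5)$, $(v_1,v_3,v_4,v_5)$, $(v_5,\dots,v_n)$ plus edges $(v_1,v_6)$, $(v_3,v_6)$, $(v_3,v_7)$, $(v_1,v_n)$), the forced order gives $x(v_2)-x(v_1)>x(v_4)-x(v_3)$ while the prescribed coordinates give $y(v_2)-y(v_1)=y(v_4)-y(v_3)=1$, so the rays $\ell_{1,2}$ and $\ell_{3,4}$ diverge; the forced embedding then pins $\ell_{1,6}$ below $\ell_{1,2}$ and $\ell_{3,6}$ above $\ell_{3,4}$, so these two rays diverge as well, yet both must pass through $v_6$. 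That the obstruction is metric is confirmed by the paper's follow-up remark that the argument works precisely when $y(v_2)-y(v_1)\leq y(v_4)-y(v_3)$. Your ``left/right of a segment at height $y(w)$'' inequalities are indeed linear, but they involve convex combinations with coefficients determined by the $y$-values, and you give no configuration in which they are jointly infeasible.

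A secondary issue is the passage from $G_7$ to $G_n$ by subdivision plus relabelling: since the obstruction is metric, relabelling by a topological order changes the $y$-coordinates of the gadget's vertices and can destroy the obstruction unless the new vertices are inserted strictly above the gadget (the paper avoids this by extending the path from $v_5$ to $t$, so that $v_1,\dots,v_7$ retain the coordinates $1,\dots,7$). Your caveat about not subdividing the transitive edges in the obstruction addresses a different (topological) concern and does not cover this.
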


\begin{figure}[ht]
    \centering
    \hfill
    \begin{subfigure}{0.4\linewidth}
        \centering
        {\includegraphics[width=.9\textwidth,page=1]{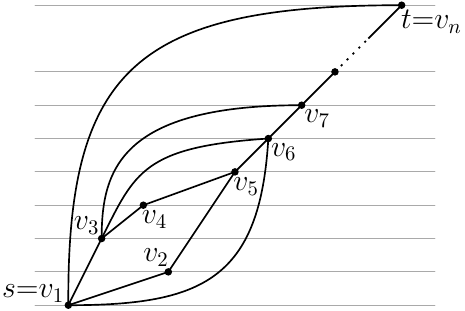}
        \subcaption{\label{fig:fixed-y-construction}}}
    \end{subfigure}
    \hfill
    \begin{subfigure}{0.4\linewidth}
        \centering
        {\includegraphics[width=.9\textwidth,page=2]{Fixed-y.pdf}
        \subcaption{\label{fig:fixed-y-proof}}}
    \end{subfigure}        
    \hfill
    \caption{(a) The graph for the proof of Theorem~\ref{th:fixed-y-counter}. (b) The rays $\ell_{1,2}$ and $\ell_{3,4}$ diverge.}
    \label{fig:fixed-y}
\end{figure}

\begin{proof}
The $st$-planar graph $G_n$ consists of the directed paths $(s=v_1,v_2,v_5)$, $(v_1,v_3,v_4,v_5)$, $(v_5,v_6,\dots,v_n=t)$, and of the edges $(v_1,v_6)$, $(v_3,v_6)$, $(v_3,v_7)$, and $(v_1,v_n)$. Fig~\ref{fig:fixed-y-construction} shows a planar dominance drawing of $G_n$ with $y(v_i)=i$, for $i=1,\dots,n$. Suppose, for a contradiction, that a planar straight-line dominance drawing $\Gamma$ of $G_n$ exists with $y(v_i)=i$, for $i=1,\dots,n$. 
We prove that the plane embedding in $\Gamma$ of the underlying graph of $G_n$ is the one in Fig~\ref{fig:fixed-y-construction}, except, possibly, for the position of the edge $(s,t)$. Obviously, the path $(v_1,v_2,v_5,v_6,\dots,v_n)$ has a unique plane embedding. Since $v_2$ and $v_4$ are incomparable and $y(v_2)<y(v_4)$, we have $x(v_4)\leq x(v_2)$, hence the clockwise order of the vertices along the cycle $\mathcal C:=(v_1,v_3,v_4,v_5,v_2)$ is $v_1,v_3,v_4,v_5,v_2$. From that, we get that the edges $(v_3,v_6)$ and $(v_3,v_7)$ lie above the path $(v_3,v_4,v_5,v_6,v_7)$, and finally that the edge $(v_1,v_6)$ lies below the path $(v_1,v_2,v_5,v_6)$. 
%
%
For any distinct $i,j \in \{1,\dots,n\}$, let $\ell_{i,j}$ be the ray starting at $v_i$ and passing through $v_j$. Since $x(v_1)<x(v_3)<x(v_4)\leq x(v_2)$, we have $x(v_2)-x(v_1)>x(v_4)-x(v_3)$. Also, we have $y(v_2)-y(v_1)=y(v_4)-y(v_3)=1$. Hence, the ray $\ell_{1,2}$ has smaller slope than the ray $\ell_{3,4}$; that is, such rays diverge, see Fig~\ref{fig:fixed-y-proof}. Since the ray $\ell_{1,6}$ has smaller slope than $\ell_{1,2}$, and since the ray $\ell_{3,6}$ has larger slope than $\ell_{3,4}$, it follows that $\ell_{1,6}$ and $\ell_{3,6}$ also diverge, while they meet at $v_6$, a contradiction which proves the theorem. Note that vertices $v_8,\dots,v_n$ only serve the purpose of creating an infinite graph family. 
\end{proof}

We can similarly show that one cannot, in general, prescribe the $x$-coordinates of a planar straight-line dominance drawing.  

Also, we can strengthen Theorem~\ref{th:fixed-y-counter} by proving that, for every $n\geq 10$ and for every sequence $y_1<\dots<y_n$ of $y$-coordinates, there exists an $st$-planar graph $G'_n$ with vertex set $\{v_1,\dots,v_n\}$ such that there exists a planar dominance drawing of $G'_n$ with $y(v_i)=y_i$, for $i=1,\dots,n$, and there exists no planar straight-line dominance drawing of $G'_n$ with $y(v_i)=y_i$, for $i=1,\dots,n$. That is, the $y$-coordinates prescribed by Theorem~\ref{th:fixed-y-counter} do not need to be uniformly distributed. 

The key point for this is the observation that the proof of Theorem~\ref{th:fixed-y-counter} works as long as $y(v_2)-y(v_1)\leq y(v_4)-y(v_3)$. Hence, we can consider the four lines $y=y_i$, with $i=4,5,6,7$, and then distinguish two cases. If $y_5-y_4\leq y_7-y_6$, we let our $st$-planar graph $G'_n$ contain the graph $G_7$ from the proof of Theorem~\ref{th:fixed-y-counter} and we set $y(v_i)=y_{i+3}$, for $i=1,\dots,7$, where $v_1,\dots,v_7$ is the vertex set of $G_7$. Otherwise, that is, if $y_7-y_6< y_5-y_4$, we let our $st$-planar graph $G'_n$ contain the graph obtained by reversing the edge directions of the graph $G_7$ and we set $y(v_i)=y_{8-i}$, for $i=1,\dots,7$, where $v_1,\dots,v_7$ is the vertex set of $G_7$. 


\section{$st$-plane $3$-trees} \label{se:$3$-trees}

A \emph{plane $3$-tree} is a plane graph recursively defined as follows. A $3$-cycle embedded in the plane is a plane $3$-tree. Any plane $3$-tree with $n\geq 4$ vertices can be obtained from a plane $3$-tree with $n-1$ vertices by \emph{stacking} a new vertex into an internal face, that is, by connecting the new vertex to the three vertices incident to the face. An \emph{$st$-plane $3$-tree} is an $st$-plane graph whose underlying graph is a plane $3$-tree. In our opinion, $st$-plane $3$-trees constitute a very challenging class of $st$-plane graphs for our problem. Indeed, the ``natural'' strategies for drawing the graphs in this class are to either recursively construct and then combine the drawings of three smaller $st$-plane $3$-trees, or to iteratively add a single vertex to a previously constructed drawing of a smaller $st$-plane $3$-tree; both these strategies do not cope well with the geometric relationship that has to be ensured for incomparable vertices. Nevertheless, in this section we show how to obtain planar straight-line dominance drawings of three classes of $st$-plane $3$-trees.


\subsection{Upper $st$-plane $3$-trees}

Consider the construction of an $st$-plane $3$-tree $G$ via repeated stacking operations. If a vertex $u$ is stacked into a face delimited by a cycle $(a,b,c)$, where $a$ and $c$ are the source and the sink of the cycle, respectively, then the edge $(a,u)$ is directed from $a$ to $u$, the edge $(u,c)$ is directed from $u$ to $c$, while the edge $(b,u)$ might be directed either way. We say that $G$ is an \emph{upper $st$-plane $3$-tree} if, at every stacking operation, the edge that can be directed either way is always directed towards the newly inserted vertex. We have the following. 

\begin{theorem} \label{th:upper-3-trees}
Upper $st$-plane $3$-trees admit planar straight-line dominance drawings.
\end{theorem}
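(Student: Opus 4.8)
The plan is to exploit the structural constraint of upper $st$-plane $3$-trees: because every stacking operation into a face $(a,b,c)$ with source $a$ and sink $c$ directs the ``free'' edge $(b,u)$ toward the new vertex $u$, the source $a$ is a predecessor of $u$, the sink $c$ is a successor of $u$, and $b$ is a predecessor of $u$. Thus whenever a vertex is stacked, it gains two predecessors ($a$ and $b$) among the three boundary vertices and one successor ($c$). I would first try to understand the reachability structure this forces, and in particular argue that the sink $c$ of the face that a vertex is stacked into is, in a precise sense, a ``canonical successor'': I expect that one can orient the whole process so that processing the stacking operations in reverse order (i.e., peeling off degree-3 vertices whose removal keeps the graph an upper $st$-plane $3$-tree) gives an inductive handle.

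The main approach I would take is induction on the number of vertices, maintaining a strengthened invariant about the drawing. The natural base case is the triangle $(s, b, t)$, drawn as a dominance drawing (all three pairwise-related). For the inductive step, I would take an upper $st$-plane $3$-tree $G$ on $n \geq 4$ vertices, let $u$ be the last stacked vertex (stacked into a face $(a,b,c)$ of $G' = G - u$, with $a,b \in \mathcal P(u)$ and $c \in \mathcal S(u)$), and obtain a planar straight-line dominance drawing $\Gamma'$ of $G'$ by induction. I then need to place $u$ inside the triangle $abc$ of $\Gamma'$ so that: (i) the three new segments $au$, $bu$, $uc$ have positive slope and do not cross anything; (ii) $u$ dominates exactly $\mathcal P(u)$ and is dominated by exactly $\mathcal S(u)$. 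Condition (ii) decomposes into: $u$ must lie in the first quadrant of (i.e., dominate) every vertex in $\mathcal P(u)$ and in the third quadrant of every vertex in $\mathcal S(u)$; and $u$ must be incomparable (in the dominance sense) to every other vertex. Since $\mathcal P(u) = \mathcal P(a) \cup \mathcal P(b) \cup \{a,b\}$ and $\mathcal S(u) = \mathcal S(c) \cup \{c\}$, and since $\Gamma'$ is already a dominance drawing, the key geometric fact I would want is that the region of the triangle $abc$ consisting of points dominating $a$ and $b$ and dominated by $c$ is non-empty and ``hangs'' near the bottom-right part of the triangle near $c$; placing $u$ sufficiently close to $c$ along an appropriate direction should then also guarantee incomparability with all remaining vertices (a vertex $w$ incomparable to $u$ in $G$ is, by the dominance property applied to $G'$, incomparable to $a$, $b$, or $c$ appropriately, and shrinking toward $c$ inherits enough of $c$'s incomparabilities) and planarity (a sufficiently small perturbation near an existing vertex keeps planarity, as in Lemma~\ref{le:strict}).

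The step I expect to be the main obstacle is precisely verifying incomparability: a vertex $w \neq u$ that is incomparable to $u$ in $G$ need not be incomparable to $c$ in $G'$ — it could be a proper successor of $c$, or a proper predecessor of $a$ or $b$, yet still unrelated to $u$ — so placing $u$ ``near $c$'' is too crude in general. The fix I anticipate is to choose the placement direction for $u$ more carefully: inside triangle $abc$, the locus of points dominating both $a$ and $b$ but dominated by $c$ is bounded by the horizontal/vertical lines through $a$, $b$, and $c$; one shows (using that $a,b,c$ form a face and the plane embedding) that this locus is a non-degenerate region interior to the triangle, and that for $u$ anywhere in a suitably small sub-neighborhood of this region, $u$'s set of dominators is exactly $\{$ vertices dominating all of $a,b\} \cup \{a,b\}$ and $u$'s set of dominated vertices is exactly $\{$ vertices dominated by $c\} \cup \{c\}$, which by the inductive dominance property of $\Gamma'$ equals $\mathcal P(u)$ and $\mathcal S(u)$ as required. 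Making this ``suitably small sub-neighborhood'' argument precise — in particular ruling out that $u$ accidentally dominates or is dominated by some vertex strictly between the relevant horizontal/vertical lines — is where the careful case analysis lives, and I would handle it by first snapping $\Gamma'$ to have all distinct $x$- and distinct $y$-coordinates (Lemma~\ref{le:strict}) so there is a strictly positive gap to exploit, then arguing that the $x$- and $y$-coordinate of $u$ can be squeezed into the open gaps immediately above $\max(x(a),x(b))$, $\max(y(a),y(b))$ and immediately below $x(c)$, $y(c)$ while staying inside the triangle.
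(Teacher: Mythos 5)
Your approach is genuinely different from the paper's: the paper recurses \emph{top-down} on the \emph{first} stacked vertex $r$, splitting $G$ into the three subgraphs inside $(s,m,r)$, $(s,r,t)$, $(m,r,t)$ and carrying a strengthened invariant (outer triangle prescribed, all internal vertices confined to a prescribed disk $D$, with sub-disks $D_1,D_2,D_3$ arranged to encode the three possible relationships to $r$). You instead peel off the \emph{last} stacked vertex and re-insert it into a drawing of $G-u$ obtained from a plain inductive hypothesis. Unfortunately, the plain hypothesis is too weak, and the step you yourself flag as the main obstacle does fail.

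Here is a concrete obstruction. Build an upper $st$-plane $3$-tree as follows: start from the outer face $(s,m,t)$, stack $c$ into it, stack $v_1$ into $(s,m,c)$, stack $v_2$ into $(m,v_1,c)$, and finally stack $u$ into $(s,v_1,c)$, so $a=s$, $b=v_1$. Then $v_2$ is a successor of both $a$ and $b$ and a predecessor of $c$, yet it is incomparable to $u$ (its successors are $c,t$ and its predecessors are $s,m,v_1$). Since $\Gamma'$ is a dominance drawing of $G-u$, the vertex $v_2$ necessarily lies \emph{strictly inside} the open box $(x(v_1),x(c))\times(y(v_1),y(c))$ --- exactly the locus where you must place $u$ to dominate $a,b$ and be dominated by $c$. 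So $u$ cannot be placed ``near $c$'', nor can its coordinates be squeezed simultaneously ``immediately above'' $x(b),y(b)$ and ``immediately below'' $x(c),y(c)$ (those gaps are separated by $v_2$); $u$ is forced into the second or fourth quadrant of $v_2$, intersected with the box, intersected with the \emph{open triangular face} $(s,v_1,c)$ of $\Gamma'$ (planarity requires $u$ to lie inside that face). Whether this triple intersection is nonempty depends entirely on the shape of $\Gamma'$, which your inductive hypothesis does not control: an inductively produced $\Gamma'$ in which the triangle $(s,v_1,c)$ is a thin sliver hugging the segment $v_1c$ on the first-/third-quadrant side of $v_2$ leaves no admissible position. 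With several such ``dangerous'' vertices (common successors of $a,b$ that are predecessors of $c$, or more generally predecessors of $c$ not below $b$ and successors of $b$ not above $c$) the feasible region can be carved away entirely. Repairing this requires a strengthened geometric invariant that co-ordinates the face shapes with the positions of all vertices incomparable to future insertions; the paper sidesteps the issue by decomposing at the first stacked vertex, where the upper condition guarantees only three relationship types (predecessor of $r$, successor of $r$, and the two mutually incomparable sides), which is what makes the three-disk invariant sufficient.
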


\begin{proof}
Let $G$ be an $n$-vertex upper $st$-plane $3$-tree whose outer face is delimited by the cycle $(s,m,t)$. Let $\Delta$ be any triangle with vertices $p_s,p_m,p_t$, where $x(p_s)<x(p_m)<x(p_t)$ and $y(p_s)<y(p_m)<y(p_t)$. Also, let $D$ be a closed disk in the interior of $\Delta$ such that, for any point $p$ in $D$, we have  $x(p_m)<x(p)<x(p_t)$ and $y(p_m)<y(p)<y(p_t)$; see Figs~\ref{fig:upper-input-1} and~\ref{fig:upper-input-2}. We prove by induction that $G$ admits a planar straight-line dominance drawing such that:
\begin{itemize}
    \item $s$ lies at $p_s$, $m$ lies at $p_m$, and $t$ lies at $p_t$; and
    \item every internal vertex of $G$ lies in the interior of $D$.
\end{itemize}

\begin{figure}[t]
    \centering
    \begin{subfigure}{0.24\linewidth}
        \centering
        {\includegraphics[width=.9\textwidth,page=1]{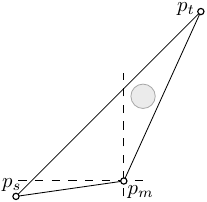}
        \subcaption{\label{fig:upper-input-1}}}
    \end{subfigure}
    \hfill
    \begin{subfigure}{0.24\linewidth}
        \centering
        {\includegraphics[width=.9\textwidth,page=3]{Upper.pdf}
        \subcaption{\label{fig:upper-subd-1}}}
    \end{subfigure}
    \hfill
    \begin{subfigure}{0.24\linewidth}
        \centering
        {\includegraphics[width=.9\textwidth,page=2]{Upper.pdf}
        \subcaption{\label{fig:upper-input-2}}}
    \end{subfigure}
    \hfill
    \begin{subfigure}{0.24\linewidth}
        \centering
        {\includegraphics[width=.95\textwidth,page=4]{Upper.pdf}
        \subcaption{\label{fig:upper-subd-2}}}
    \end{subfigure}
    \caption{(a) and (c) Triangle $\Delta$ and disk $D$ for the input to the induction. (b) and (d) Placing point $p_r$ (white) and disks $D_1$, $D_2$, and $D_3$ (gray) inside $D$; an enlarged view of the placement of $r$ and of disks $D_1$, $D_2$, and $D_3$ inside $D$ is also shown.}
	\label{fig:upper}
\end{figure}

The statement clearly implies the theorem. In the base case, in which $n=3$, the triangle $\Delta$ is the required drawing of $G$ and the statement is trivially true. 

Suppose now that $n>3$. Let $r$ be the first stacked vertex in the construction of $G$; that is, $r$ is the unique vertex of $G$ adjacent to $s$, $m$, and $t$. Note that the edge $(m,r)$ is directed away from $m$, given that $G$ is an upper $st$-plane $3$-tree. Let $G_1$, $G_2$, and $G_3$ be the subgraphs of $G$ inside the cycles $(s,m,r)$, $(s,r,t)$, and $(m,r,t)$, respectively. Note that $G_1$ is an upper $sr$-plane $3$-tree, $G_2$ is an upper $st$-plane $3$-tree, and $G_3$ is an upper $mt$-plane $3$-tree. Also, each of $G_1$, $G_2$, and $G_3$ has less than $n$ vertices. Let $p_r$ be any point inside $D$ and let $\Delta_1$, $\Delta_2$, and $\Delta_3$ be the triangles $(p_s,p_m,p_r)$, $(p_s,p_r,p_t)$, and $(p_m,p_r,p_t)$, respectively. Place $r$ at $p_r$; by the properties of $D$, we have $x(p_m)<x(p_r)$ and $y(p_m)<y(p_r)$, which complies with the orientation of $(m,r)$. Let $D_1$, $D_2$, and $D_3$ be closed disks such that (see Figs~\ref{fig:upper-subd-1} and~\ref{fig:upper-subd-2}):

\begin{itemize}
\item disk $D_1$ lies in the interior of $\Delta_1\cap D$, disk $D_2$ lies in the interior of $\Delta_2\cap D$, and disk $D_3$ lies in the interior of $\Delta_3\cap D$; 
\item for any point $p \in D_2 \cup D_3$, we have $x(p_r)<x(p)$ and $y(p_r)<y(p)$; and
\item for any point $p_2\in D_2$ and any point $p_3\in D_3$, if the clockwise order of the vertices of $\Delta$ is $p_s,p_t,p_m$, then  we have $x(p_2)<x(p_3)$ and $y(p_3)<y(p_2)$, otherwise we have $y(p_2)<y(p_3)$ and $x(p_3)<x(p_2)$.
\end{itemize}

Clearly, disks $D_1$, $D_2$, and $D_3$ with the above properties always exist. By induction, $G_1$, $G_2$, and $G_3$ have  planar straight-line dominance drawings $\Gamma_1$, $\Gamma_2$, and $\Gamma_3$ with $s$, $m$, $r$, and $t$ drawn at $p_s$, $p_m$, $p_r$, and $p_t$, respectively, so that the internal vertices of $G_1$, $G_2$, and $G_3$ lie in the interior of $D_1$, $D_2$, and $D_3$, respectively. This results in a straight-line drawing $\Gamma$ of $G$. 

Since $p_r$, $D_1$, $D_2$, and $D_3$ lie in the interior of $D$, all the internal vertices of $G$ lie in the interior of $D$, as required. The upward planarity of $\Gamma$ follows from the ones of $\Gamma_1$, $\Gamma_2$, and $\Gamma_3$. In order to prove that $\Gamma$ is a dominance drawing, consider any pair of vertices $u$ and $v$. 
\begin{itemize}
\item If $u$ and $v$ belong to the same graph $G_i$, for some $i\in \{1,2,3\}$, then their placement complies with their dominance relationship, by induction.
\item If one of $u$ and $v$ is $s$, say $u=s$, then $u$ is a predecessor of $v$, and indeed we have $x(u)<x(v)$ and $y(u)<y(v)$. The case $u=t$ can be discussed similarly.
\item If neither of $u$ and $v$ is $s$ or $t$, and one of $u$ and $v$ is $m$, say $u=m$, then $u$ is a predecessor of $v$, since $G$ is an upper $st$-plane $3$-tree. Since $x(p_m)<x(p)$ and $y(p_m)<y(p)$, for any point $p \in D$, we have $x(u)<x(v)$ and $y(u)<y(v)$.  
\item If $u$ is an internal vertex of $G_1$ and $v$ is an internal vertex of $G_2$ or $G_3$, then $u$  is a predecessor of $v$, since $G$ is an upper $st$-plane $3$-tree. Since, for any point $p \in D_1$ and any point $q\in D_2\cup D_3$, we have $x(p)<x(p_r)<x(q)$ and $y(p)<x(y_r)<y(q)$, the placement of $u$ and $v$ complies with their dominance relationship. 
\item Finally, if $u$ is an internal vertex of $G_2$ and $v$ is an internal vertex of $G_3$, then $u$ and $v$ are incomparable, since $G$ is an upper $st$-plane $3$-tree. Since, for any point $p_2 \in D_2$ and any point $p_3\in D_3$, we have $x(p_2)<x(p_3)$ and $y(p_3)<y(p_2)$, or $y(p_2)<y(p_3)$ and $x(p_3)<x(p_2)$, the placement of $u$ and $v$ complies with their dominance relationship.
\end{itemize}
This completes the induction and the proof of the theorem.
\end{proof}

An analogous result holds true for $st$-plane $3$-trees such that, at every stacking operation, the edge that can be directed either way is always directed out of the newly inserted vertex. 

Trying to use a similar strategy in order to construct a planar straight-line dominance drawing of every $st$-plane $3$-tree might be tempting. However, the ``types'' of internal vertices in a general $st$-plane $3$-tree are more than three. Namely, referring to the notation introduced in the proof of the theorem, the internal vertices of $G_2$ and $G_3$ are not all successors of $r$, but rather some are predecessors, some are successors, and some are incomparable to $r$. Hence, the ``three-disks schema'' fails, and more complex geometric invariants seem to be needed.


\subsection{Deep $st$-plane $3$-trees} \label{sse:deep}

We next consider the class of $st$-plane $3$-trees whose construction via repeated stacking operations has the following property: If a stacking operation stacks a vertex $u$ into a face delimited by a cycle $(a,b,c)$, then the next stacking operation happens into one of the three faces delimited by the cycles $(a,b,u)$, $(a,u,c)$, and $(u,b,c)$. If one represents the structure of an $st$-plane $3$-tree by a ternary tree, whose internal nodes correspond to non-facial $3$-cycles, whose leaves correspond to facial $3$-cycles, and whose edges represent containment between the corresponding cycles, then the $n$-vertex $st$-plane $3$-trees in the class under consideration are those whose corresponding ternary tree has maximum height, namely~$n-2$; because of this, we call {\em deep} the $st$-plane $3$-trees in this class. We have the following. 

\begin{theorem} \label{th:deep-3-trees}
Deep $st$-plane $3$-trees admit planar straight-line dominance drawings.
\end{theorem}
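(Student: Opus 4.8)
The plan is to build the drawing by processing, one stacked vertex at a time, the chain of nested triangular faces underlying a deep $st$-plane $3$-tree, maintaining a geometric invariant in the spirit of the proof of Theorem~\ref{th:upper-3-trees}; the point of deepness is that it replaces the three-way recursion of that proof by a \emph{linear} recursion along a single chain, so that only a bounded ``frontier'' ever needs to be controlled. Write $G$ as obtained from the triangle $C_1=(s,m,t)$ by stacking vertices $w_1,\dots,w_k$ in this order, with $w_i$ stacked into a facial triangle $C_i=(\alpha_i,\beta_i,\gamma_i)$ (source $\alpha_i$, sink $\gamma_i$); deepness says that, for $i<k$, the face $C_{i+1}$ is one of the three faces created when $w_i$ is stacked into $C_i$, and the other faces so created receive no further vertices. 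Let $G_i$ be the subgraph induced by $\{s,m,t,w_1,\dots,w_i\}$; since $\alpha_i,\beta_i,\gamma_i\in V(G_{i-1})$, each chain triangle is drawn before the vertex subdividing it.

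I would first record a reduction that keeps the bookkeeping manageable: for any $u,v\in V(G_i)$, there is a directed $u$--$v$ path in $G$ if and only if there is one in $G_i$. Indeed, given a $u$--$v$ path in $G$, as long as it uses a vertex $w_\ell$ of maximum index not in $V(G_i)$, it enters $w_\ell$ from, and leaves it to, two vertices among $\{\alpha_\ell,\beta_\ell,\gamma_\ell\}$, and the sub-path $a\to w_\ell\to b$ can be replaced by the single edge $a\to b$ of the triangle $C_\ell$; iterating removes all high-index vertices. Consequently, a dominance drawing $\Gamma_i$ of $G_i$ already encodes all reachabilities among already-placed vertices, and the placed vertices that must be predecessors, successors, or incomparable to $w_{i+1}$ are precisely those prescribed by $\alpha_{i+1},\gamma_{i+1},\beta_{i+1}$ and by the orientation of the ``free'' edge incident to $w_{i+1}$ (the one joining $w_{i+1}$ to $\beta_{i+1}$).

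Next I would pin down the clearance condition. In a dominance drawing of $G_i$, any point $p$ in the interior of the straight-line triangle $\triangle(\alpha_{i+1},\beta_{i+1},\gamma_{i+1})$ automatically lies in the first quadrant of $\alpha_{i+1}$ and in the third quadrant of $\gamma_{i+1}$, hence is correctly placed with respect to $\alpha_{i+1}$, $\gamma_{i+1}$, and all their predecessors and successors. So placing $w_{i+1}$ at such a $p$ makes the extended drawing a dominance drawing of $G_{i+1}$ precisely when, additionally, $p$ lies in the first quadrant of $\beta_{i+1}$ (if $w_{i+1}$ must be a successor of $\beta_{i+1}$) or in its third quadrant (otherwise), and $p$ is dominance-incomparable to every placed vertex incomparable to $w_{i+1}$. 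Points meeting the $\beta_{i+1}$-requirement exist near the edge $\beta_{i+1}\gamma_{i+1}$ (resp.\ $\beta_{i+1}\alpha_{i+1}$), because the angular wedge of the triangle at the corner $\beta_{i+1}$ runs from a direction pointing into its third quadrant (towards $\alpha_{i+1}$) to one pointing into its first quadrant (towards $\gamma_{i+1}$); and no single incomparable vertex $z$ can cover the whole triangle with its first and third quadrants, since covering it with the third quadrant would make $z$ dominate $\gamma_{i+1}$ and hence reach $w_{i+1}$ (symmetrically for the first quadrant and $\alpha_{i+1}$). The invariant I would carry is therefore: $\Gamma_i$ is a planar straight-line dominance drawing of $G_i$ in which $C_{i+1}$ is drawn as a triangle containing a nonempty open region $R_{i+1}$ of points satisfying the clearance condition, and such that, after placing $w_{i+1}$ anywhere in $R_{i+1}$ and drawing its three (orientation-fixed) edges, the subface $C_{i+2}$ is again drawn as a triangle containing such a region.

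The inductive step places $w_{i+1}$ in $R_{i+1}$ --- which by the clearance condition already makes the result a planar straight-line dominance drawing $\Gamma_{i+1}$ of $G_{i+1}$ --- and must re-establish the invariant by producing the region $R_{i+2}$ inside $\triangle(C_{i+2})$; the two other new faces are never subdivided, so nothing is required of them. I expect this to be the crux: $\triangle(C_{i+2})$ stretches from $w_{i+1}$ back to two corners of $\triangle(C_{i+1})$, so it is not contained in $R_{i+1}$ and incomparability with the old incomparable vertices is not simply inherited but must be re-argued --- one must show that placing $w_{i+1}$ deep inside a suitably chosen sub-corner of $R_{i+1}$ leaves inside $\triangle(C_{i+2})$ a nonempty region lying in the correct quadrant of $\beta_{i+2}$ and avoiding the first and third quadrants of all the (finitely many) incomparable vertices, and this has to be checked uniformly over the three choices for which new face continues the chain and, within each, the two orientations of the free edge at the next stacking. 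The base case is $i=0$: draw $C_1=(s,m,t)$ as any triangle with $x(s)<x(m)<x(t)$ and $y(s)<y(m)<y(t)$ (with distinct coordinates, by Lemma~\ref{le:strict}) and let $R_1$ be the part of its interior lying in the first (resp.\ third) quadrant of $m$ according to the orientation of the edge between $m$ and $w_1$, nonempty by the wedge observation. Running the induction to $i=k$ gives the drawing $\Gamma_k$ of $G=G_k$.
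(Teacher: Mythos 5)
Your overall plan is genuinely different from the paper's: you build the drawing \emph{inward}, stacking the chain vertices $w_1,\dots,w_k$ into nested bounded triangles while maintaining a ``clearance region'', whereas the paper builds the drawing \emph{outward}, starting from a drawing of the innermost $K_4$ (via Theorem~\ref{th:hamiltonian}) and repeatedly adding the next vertex of the chain into the \emph{unbounded} outer face of the partial drawing. Unfortunately, your write-up leaves precisely the hard part unproved. The step you yourself flag as ``the crux'' --- showing that the region $R_{i+1}$ is nonempty once \emph{all} incomparable vertices are taken into account, and that a placement of $w_{i+1}$ inside it can be chosen so that a nonempty region $R_{i+2}$ survives inside $\triangle(C_{i+2})$ --- is asserted as something that ``has to be checked'' rather than checked. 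The only argument you give in this direction is that no \emph{single} incomparable vertex $z$ can cover the whole triangle with $Q_1(z)\cup Q_3(z)$; that says nothing about the union over all incomparable vertices, and it is exactly the accumulation of these quadrant constraints inside a bounded, ever-shrinking triangle that makes inward constructions fail for $st$-plane $3$-trees (this is the difficulty the paper points out when explaining why the ``three-disks schema'' of Theorem~\ref{th:upper-3-trees} does not generalize). Moreover, the set of vertices incomparable to $w_{i+2}$ is not the set incomparable to $w_{i+1}$ (e.g., $w_{i+1}$ itself and vertices reachable only through $\beta_{i+1}$ change status), so the invariant cannot simply be ``inherited''; you would need an explicit quantitative invariant describing where $R_{i+1}$ sits relative to every already-placed vertex, and you do not supply one. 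As it stands, the inductive step is not established.

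For contrast, the paper's reverse construction dissolves this difficulty: when a new vertex $v_d$ is attached on the \emph{outside} of the current drawing $\Gamma$ of $H$ (it is adjacent to exactly two of the three outer vertices, by deepness), the placement region is an unbounded wedge or slab, delimited by a constant number of explicit lines (e.g., the vertical line through the leftmost non-predecessor of $v_c$ and the horizontal line through the highest vertex other than $v_a$), and one can always find a point far enough out that is simultaneously NW/SE of every incomparable vertex and NE/SW of every comparable one. The case analysis (six cases, according to whether $v_c$ and $v_d$ are sources/sinks and the orientation of the edge between them) is then routine. If you want to salvage your inward approach, you must either prove a concrete geometric invariant guaranteeing $R_{i+1}\neq\emptyset$ at every step, or switch to the outward order.
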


\begin{proof}
Let $G$ be an $n$-vertex deep $st$-plane $3$-tree, for some $n\geq 4$. Let $v_1,v_2,\dots,v_n$ be the vertices of $G$, ordered as follows. Vertices $v_1$, $v_2$, and $v_3$ are the external vertices of $G$, ordered in any way. Vertices $v_4,v_5,\dots,v_n$ are the internal vertices of $G$, ordered as they are introduced in $G$ in its construction via repeated stacking operations. 


We construct a drawing of $G$ ``backwards''. That is, we start from a drawing of the complete graph induced by $v_n$ and its neighbors, and then we insert in the drawing the remaining vertices one by one. Each vertex is inserted in the outer face of the current drawing and, at any time, the drawn graph is an $st$-plane $3$-tree which is a subgraph of $G$. More precisely, we maintain a drawing $\Gamma$ of a subgraph $H$ of $G$ with the following properties.

\begin{itemize}
    \item {(D1)} $H$ is a deep $st$-plane $3$-tree;
    \item {(D2)} let $\mathcal C_H$ denote the cycle delimiting the outer face of $H$; then all the vertices of $G$ inside $\mathcal C_H$ belong to $H$; and
    \item {(D3)} $\Gamma$ is a  planar straight-line dominance drawing of $H$.
\end{itemize}

We remark here that the order in which the vertices are inserted into $\Gamma$ is not, in general, the reverse of the order $v_1,v_2,\dots,v_n$ defined above.

The drawing $\Gamma$ is initialized as a planar straight-line dominance drawing of the subgraph~$H$ of $G$ induced by $v_n$ and its neighbors. Since, in the construction of $G$ via repeated stacking operations, the vertex $v_n$ is stacked into a face delimited by a $3$-cycle, the graph $H$ is $K_4$, the complete graph on $4$ vertices. Since $H$ is a Hamiltonian $st$-planar graph, the drawing $\Gamma$ can be constructed as in Theorem~\ref{th:hamiltonian}. Properties (D1)--(D3) are obviously satisfied.

Suppose now that we have a drawing $\Gamma$ of a subgraph~$H$ of~$G$ satisfying Properties~(D1)--(D3). If $H$ is the entire graph~$G$, then $\Gamma$ is the desired drawing of $G$ and we are done, so assume that $H$ is a proper subgraph of $G$. We prove that we can insert in $\Gamma$ a drawing of a vertex of $G$ not in $H$, together with a drawing of its incident edges, obtaining a drawing $\Gamma'$ of a subgraph~$H'$ of~$G$, so that Properties (D1)--(D3) are still satisfied. 

Let $\mathcal C_H=(v_a,v_b,v_c)$ be the cycle delimiting the outer face of $H$. Assume, without loss of generality, that $\max\{a,b\}<c$. This implies that $c\geq 4$. Indeed, if $c=3$, then $\mathcal C_H$ delimits the outer face of $G$, which contradicts Property (D2) for $H$, given that $H$ is not the entire graph~$G$. Hence, $v_c$ was stacked into a face in the construction of $G$ via repeated stacking operations. Since when a vertex is stacked into a face its only neighbors in the graph are the vertices incident to the face and since $\max\{a,b\}<c$, it follows that $v_c$ was inserted into a face delimited by a $3$-cycle comprising $v_a$, $v_b$ and a third vertex $v_d$. Then the subgraph $H'$ of $G$ which are going to draw consists of $H$ and of such a vertex $v_d$.

Property (D1) is satisfied by $H'$ because $H$ is a deep plane $3$-tree and $v_c$ is stacked into the face delimited by $(v_a,v_b,v_d)$, hence when the first vertex of $H$ is stacked into the face delimited by $(v_a,v_b,v_c)$, one of the vertices incident to the face, namely $v_c$, was the last stacked vertex in the construction of $H'$ via repeated stacking operations.  

We prove that $H'$ satisfies Property (D2). Let $\mathcal C_{H'}$ denote the cycle delimiting the outer face of~$H'$. Suppose, for a contradiction, that not all the vertices of~$G$ inside~$\mathcal C_{H'}$ belong to~$H'$. By Property (D2) for $H$, all the vertices of $G$ inside $(v_a,v_b,v_c)$ belong to $H$ and hence to $H'$. It follows that $G$ contains a vertex inside cycle $(v_a,v_c,v_d)$ or a vertex inside cycle $(v_b,v_c,v_d)$; suppose the former, as the latter case is analogous. Note that $G$ contains at least one vertex inside the cycle $(v_a,v_b,v_c)$, namely $v_n$. Let $v_x$ be the vertex stacked into the face delimited by $(v_a,v_c,v_d)$ and let $v_y$ be the vertex of $G$ stacked into the face delimited by $(v_a,v_b,v_c)$. Suppose that $x<y$, as the other case is symmetric. This implies that $\max\{a,b,d\}<c<x<y$. Hence, when $v_y$ is stacked into the face delimited by $(v_a,v_b,v_c)$, neither of $v_a$, $v_b$, and $v_c$ is the vertex that was stacked last in the construction of $G$ via repeated stacking operations. This contradicts the fact that $G$ is a deep $st$-plane $3$-tree. 

In order to prove Property (D3), we construct a planar straight-line dominance drawing~$\Gamma'$ of~$H'$ starting from the planar straight-line dominance drawing $\Gamma$ of $H$. Assume that the sink comes immediately after the source in counter-clockwise direction along the outer face of $H$ (hence $\Gamma$ lies above the line through the source and the sink of $H$), as the other case (in which $\Gamma$ lies below the line through the source and the sink of $H$) is symmetric. Also assume that the clockwise order of the vertices along the outer face of $H$ is $v_a$, $v_b$, $v_c$, as the other case is symmetric. The construction distinguishes six cases.

\begin{figure}[t]
    \centering
    \hfill
    \begin{subfigure}{0.48\linewidth}
        \centering
        {\includegraphics[scale=.9,page=1]{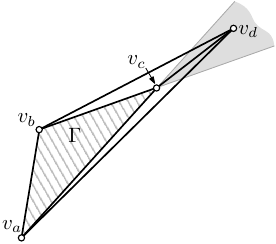}
        \subcaption{\label{fig:deep-1-1}}}
    \end{subfigure}
    \hfill
    \begin{subfigure}{0.48\linewidth}
        \centering
        {\includegraphics[scale=.9,page=2]{Deep.pdf}
        \subcaption{\label{fig:deep-1-2}}}
    \end{subfigure}
    \hfill
    \caption{Construction of a drawing $\Gamma'$ of $H'$ from the drawing $\Gamma$ of $H$ if (a) $v_c$ is the sink of $H$ or (b) $v_c$ is the source of $H$.}
    \label{fig:deep-1}
\end{figure}

In the {\bf first case} (see Figure~\ref{fig:deep-1-1}), $v_c$ is the sink of $H$. Since $v_a$ and $v_b$ have outgoing edges, namely those towards $v_c$, we have that $v_d$ is the sink of $H'$. We place $v_d$ at any point inside the wedge with an angle smaller than $180^{\circ}$ delimited by: (i) the ray starting at $v_c$, lying on the line through $v_a$ and $v_c$, and directed away from $v_a$, and (ii) the ray starting at $v_c$, lying on the line through $v_b$ and $v_c$, and directed away from $v_b$. The constructed drawing $\Gamma'$ is a planar straight-line dominance drawing of $H'$. Indeed, planarity comes from the planarity of~$\Gamma$ and from the placement of $v_d$ in the above wedge. Also, that~$\Gamma'$ is a dominance drawing comes from the fact that $\Gamma$ is a dominance drawing of~$H$, and from the fact that $v_d$ is the sink of $H'$ and is above and to the right of every vertex of $H$.   

In the {\bf second case} (see Figure~\ref{fig:deep-1-2}), $v_c$ is the source of $H$. This case can be handled symmetrically to the previous one.

In the remaining four cases, $v_c$ is neither the source nor the sink of $H$. By the initial assumptions, we have that $v_b$ is the source and $v_a$ the sink of $H$. 

\begin{figure}[t]
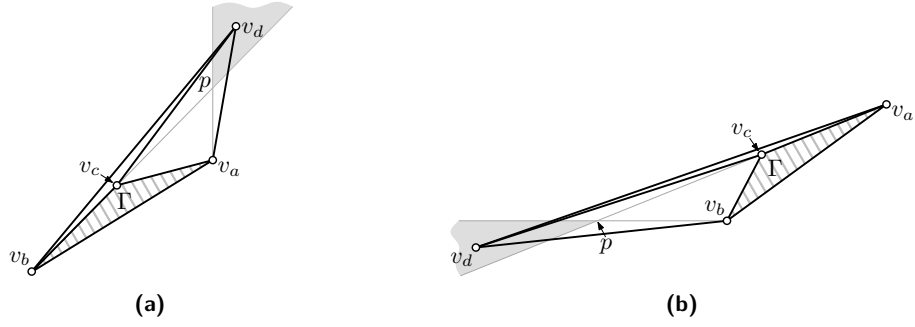

    \centering
    \hfill
    \begin{subfigure}{0.48\linewidth}
        \centering
        {\includegraphics[scale=.9,page=3]{Deep.pdf}
        \subcaption{\label{fig:deep-2-3}}}
    \end{subfigure}
    \hfill
    \begin{subfigure}{0.48\linewidth}
        \centering
        {\includegraphics[scale=.9,page=4]{Deep.pdf}
        \subcaption{\label{fig:deep-2-4}}}
    \end{subfigure}
    \hfill
    \caption{Construction of a drawing $\Gamma'$ of $H'$ from the drawing $\Gamma$ of $H$ if $v_c$ is neither the source nor the sink of $H$ and (a) $v_d$ is the sink of $H'$ or (b) $v_d$ is the source of $H'$.}
    \label{fig:deep-2}
\end{figure}

In the {\bf third case} (see Figure~\ref{fig:deep-2-3}), $v_d$ is the sink of $H'$. Let $p$ be the intersection point between the line through $v_b$ and $v_c$ and the vertical line through $v_a$. We place $v_d$ at any point inside the wedge with an angle smaller than $180^{\circ}$ delimited by: (i) the ray starting at~$p$, lying on the line through $v_b$ and $v_c$, and directed away from $v_b$, and (ii) the ray starting at $p$, lying on the vertical line through $v_a$, and directed away from $v_a$. The proof that the constructed drawing $\Gamma'$ is a planar straight-line dominance drawing of $H'$ is analogous to the one of the first case.   

In the {\bf fourth case} (see Figure~\ref{fig:deep-2-4}), $v_d$ is the source of $H'$. This case can be handled symmetrically to the previous one.

In the remaining two cases, $v_d$ is neither the source nor the sink of $H'$.

\begin{figure}[t]
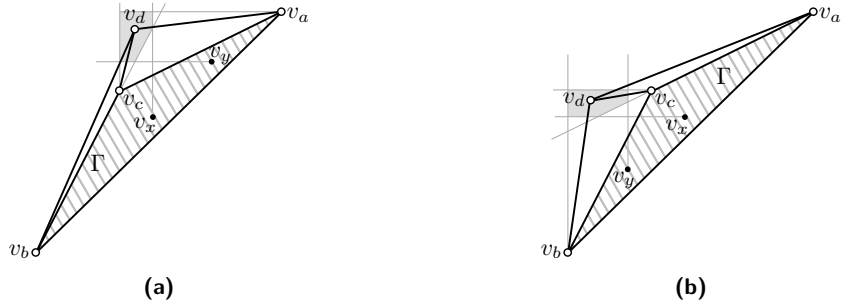

    \centering
    \hfill
    \begin{subfigure}{0.48\linewidth}
        \centering
        {\includegraphics[scale=.9,page=5]{Deep.pdf}
        \subcaption{\label{fig:deep-3-5}}}
    \end{subfigure}
    \hfill
    \begin{subfigure}{0.48\linewidth}
        \centering
        {\includegraphics[scale=.9,page=6]{Deep.pdf}
        \subcaption{\label{fig:deep-3-6}}}
    \end{subfigure}
    \hfill
    \caption{Construction of a drawing $\Gamma'$ of $H'$ from the drawing $\Gamma$ of $H$ if $v_c$ is neither the source nor the sink of $H$, $v_d$ is neither the source nor the sink of $H'$, and (a) the edge between $v_c$ and $v_d$ is directed towards $v_d$ or (b) the edge between $v_c$ and $v_d$ is directed towards $v_c$.}
    \label{fig:deep-3}
\end{figure}

In the {\bf fifth case} (see Figure~\ref{fig:deep-3-5}), the edge between $v_c$ and $v_d$ is directed towards $v_d$. Let~$v_x$ be the leftmost vertex in $\Gamma$ that is different from~$v_c$ and that is not a predecessor of~$v_c$ (possibly~$v_x=v_a$). Also, let~$v_y$ be the highest vertex in~$\Gamma$ that is different from~$v_a$ (possibly~$v_y=v_c$). We place~$v_d$ at any point inside the region~$R$ that is: (i) to the right of the vertical line through~$v_c$; (ii) above the line through~$v_b$ and~$v_c$; (iii) below the horizontal line through~$v_a$; (iv) to the left of the vertical line through~$v_x$; and (v) above the horizontal line through~$v_y$. It is easy to see that~$R$ is a region with interior points. Indeed, any point that is below and to the right of the intersection point between the vertical line through~$v_c$ and the horizontal line through~$v_a$, and sufficiently close to such an intersection point, belongs to~$R$. The constructed drawing $\Gamma'$ is a planar straight-line dominance drawing of $H'$. Indeed, planarity comes from the planarity of $\Gamma$ and from the placement of $v_d$ above $v_c$ and above the line through~$v_b$ and~$v_c$. Also, that~$\Gamma'$ is a dominance drawing comes from the fact that $\Gamma$ is a dominance drawing of~$H$, and from the fact that $v_d$ is in the correct dominance relationship. Indeed, it is below and to the left of $v_a$, which is its only successor in $H'$, it is above and to the right of $v_c$ and of the predecessors of $v_c$, which are its predecessors in $H'$, and it is above and to the left of every other vertex, which is incomparable to it, given that it is above the horizontal line through $v_y$ and to the left of the vertical line through $v_x$.

Finally, in the {\bf sixth case} (see Figure~\ref{fig:deep-3-6}), the edge between $v_c$ and $v_d$ is directed towards $v_c$. This case can be handled symmetrically to the previous one. This concludes the proof.
\end{proof}

\subsection{Sink-dominant $st$-plane $3$-trees}

We next look at the $st$-plane $3$-trees in which every stacking operation happens in a face incident to the sink $t$ of the graph. This results in an $st$-plane $3$-tree in which the sink is adjacent to every vertex. We call this a \emph{sink-dominant $st$-plane $3$-tree}. It is easy to observe that every $n$-vertex maximal $st$-plane graph in which the sink has degree $n-1$ is a sink-dominant $st$-plane $3$-tree (and vice versa). We have the following. 

\begin{theorem} \label{th:sink-dom-3-trees}
	Sink-dominant $st$-plane $3$-trees admit planar straight-line dominance drawings.
\end{theorem}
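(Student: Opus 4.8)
The plan is to build the drawing \emph{incrementally}, following the stacking construction of the graph from the outside in, exactly in the spirit of the proof of Theorem~\ref{th:deep-3-trees}. Let $G$ be a sink-dominant $st$-plane $3$-tree, and recall that every vertex of $G$ is adjacent to the sink $t$. The key structural observation I would exploit is that, if we delete $t$ from $G$, the remaining graph $G-t$ is an \emph{outerplanar} $st'$-plane graph whose single source is $s$: indeed, all the vertices of $G-t$ lie on the boundary of the face of $G-t$ that contained $t$, since $t$ was adjacent to all of them. So $G-t$ is a maximal outerplanar graph with an acyclic orientation having a single source $s$ and a single sink $t'$ (where $t'$ is the neighbour of $t$ that is also a source-of-the-outer-triangle partner). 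The sink-dominant structure then says that $G$ is obtained from $G-t$ by adding $t$ ``above'' in the plane and connecting it to every vertex, with every such edge oriented towards $t$.

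Given this, I would first draw $G-t$ and then place $t$. For $G-t$, since it is an outerplanar $st'$-plane graph with one source and one sink, all internal faces are triangles, and one can process the internal vertices in the order in which the corresponding ears are attached: each new internal vertex $v$ of $G-t$ is stacked into a triangular face of a previously drawn outerplanar subgraph, but crucially in a maximal outerplanar graph the ``inner'' structure is a tree of triangles, and one can always add the new vertex in a tiny neighbourhood of the chord it subtends, inside a suitably thin triangular pocket, so as to respect both planarity and the dominance/$xy$-monotonicity relation inherited from the chord's two endpoints; this is essentially the same local wedge-placement argument used repeatedly in the proof of Theorem~\ref{th:deep-3-trees}, and it is routine because in $G-t$ the reachability between the endpoints of any chord and the new vertex is completely determined by a single triangle. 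The real content is maintaining, inductively, a drawing of the drawn outerplanar subgraph $H\subseteq G-t$ in which (i) the drawing is a planar straight-line dominance drawing of $H$, and (ii) every vertex of $H$ lies strictly below some fixed ``ceiling'' line $L$ and moreover the $x$-order of the external vertices of $H$ along the top boundary matches the left-to-right order needed so that $t$, when placed high up on $L$, sees all of them in the first quadrant.

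Then I would place $t$ at a point on $L$ that is above and to the right of \emph{every} vertex drawn so far; since every edge incident to $t$ is oriented into $t$, and $t$ dominates all other vertices, the dominance relation with $t$ is automatically correct, and planarity of the edges from $t$ to the external vertices of $G-t$ follows because those vertices appear in convex position as seen from a sufficiently far point on $L$ — concretely, I would first guarantee during the construction of $G-t$ that the external (top-boundary) vertices of $G-t$ are in \emph{convex position} and appear in the plane in the same left-to-right order as their index along the $s$--$t'$ boundary path, which can be arranged by the standard trick of drawing the outerplanar graph with its Hamiltonian outer cycle as a convex polygon and the chords as non-crossing diagonals (stretched and rotated by $45^\circ$ as in Theorem~\ref{th:hamiltonian} so that it is $xy$-monotone). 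Actually, the cleanest route is: $G-t$ is Hamiltonian (its outer boundary is a Hamiltonian $st'$-path), so by Theorem~\ref{th:hamiltonian} it admits a planar straight-line dominance drawing; I then only need to argue that such a drawing can be chosen so that all vertices are visible from a common apex $t$ placed to the upper right, and connecting $t$ keeps the drawing planar and dominance.

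The main obstacle I anticipate is precisely this last visibility/planarity requirement for the star of edges from $t$: a generic dominance drawing of $G-t$ need not have its vertices visible from a single upper-right apex without crossing the already-drawn chords of $G-t$. To handle this I would not take an arbitrary drawing of $G-t$, but build it carefully so that the \emph{outer} (top-right) boundary of the drawing of $G-t$ is $x$- and $y$-monotone and $t$ can be attached to each boundary vertex by a segment that hugs this boundary; equivalently, I would argue that one can choose the drawing of $G-t$ so that every vertex is on the upper-right convex hull, or so that the region ``above and to the right of the drawing'' is seen by a narrow cone from a far apex — this is where a small amount of care (and an exponentially small resolution, acceptable per the paper's preamble) is needed. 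Once the apex placement is justified, verifying that $\Gamma$ is a dominance drawing reduces to two cases — pairs inside $G-t$ (handled by induction / Theorem~\ref{th:hamiltonian}) and pairs involving $t$ (trivial, since $t$ dominates everything and every vertex reaches $t$) — so the bulk of the write-up is the careful inductive bookkeeping of the wedge placements and the convex/monotone shape of the drawn subgraph's upper-right boundary.
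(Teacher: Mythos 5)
Your reduction ``draw $G-t$, then add $t$ as an apex'' rests on two claims that do not hold. First, $G-t$ is \emph{not} in general a Hamiltonian $st'$-planar graph, so Theorem~\ref{th:hamiltonian} cannot be invoked: stack $r$ into $(s,m,t)$ with $(m,r)$ directed towards $r$, then stack $r'$ into $(s,r,t)$ and $r''$ into $(m,r,t)$, orienting both new ``free'' edges towards $r$. In $G-t$ each of $r'$ and $r''$ has $r$ as its unique out-neighbour, so no directed Hamiltonian path exists (and in fact $G$ itself is not Hamiltonian); moreover $G-t$ may have many sinks, and the outer boundary of $G-t$ (the link of $t$) fails to be a directed path as soon as one stacked vertex is a successor of both non-$t$ vertices of its face. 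Second, the ``routine'' wedge placement of an ear $v$ of $G-t$ near its chord $(u,w)$ is exactly where the real difficulty lives: if $(u,w)$ is transitive in the already-drawn graph, there is a vertex $z$ with $u\prec z\prec w$ that is \emph{incomparable} to $v$, and any dominance drawing places $z$ inside the axis-parallel box spanned by $u$ and $w$ --- precisely where a point close to the segment $\overline{uw}$ would land. So $v$ cannot be put ``in a tiny neighbourhood of the chord'' without ending up in the first or third quadrant of $z$; the claim that reachability is ``completely determined by a single triangle'' is false for such pairs. This is the same transitive-edge obstruction the paper identifies as the core of the problem, and it is not resolved by your sketch. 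The third issue, which you yourself flag as the main obstacle --- arranging that every vertex of $G-t$ is visible from a single far upper-right apex while simultaneously satisfying all pairwise dominance constraints --- is the crux of the theorem and is left entirely open in your proposal.

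For comparison, the paper's proof does not delete $t$; it recurses on $G$ directly. It normalises the instance so that the internal vertex $r$ adjacent to $s,m,t$ is a successor of $m$, takes the two longest directed paths $P_s$ and $P_m$ from $s$ and from $m$ to $r$, and observes that each triangle $(u_{i-1},u_i,t)$ and $(v_{j-1},v_j,t)$ along these paths bounds a smaller sink-dominant instance. The induction hypothesis prescribes a triangle together with \emph{two} nested disks (one for successors of the middle vertex, one for vertices incomparable to it), the cycle $P_s\cup P_m\cup(s,m)$ is drawn convex so that the star of edges into $t$ causes no crossings, and the disks for the recursive subdrawings are placed along two axis-aligned staircases so that incomparable vertices in distinct subinstances automatically land in each other's second/fourth quadrants. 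That disk bookkeeping is precisely the machinery your proposal would need to invent to close the gaps above.
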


\begin{proof}
Let $G$ be an $n$-vertex sink-dominant $st$-plane $3$-tree whose outer face is delimited by the cycle $(s,m,t)$. 

\smallskip\noindent{\bf Assumption:} If $n>3$, then let $r$ be  the internal vertex of $G$ adjacent to $s$, $m$ and $t$. If $r$ is a predecessor of $m$, as in Fig~\ref{fig:dominant-assumption}, we add a new source $s'$ adjacent to $s$, $m$ and $t$ in the outer face of $G$, so that the outer face of the resulting graph $G'$ is delimited by the $3$-cycle $(s',s,t)$. Now $m$ is the internal vertex of $G'$ adjacent to $s'$, $s$ and $t$; furthermore, $m$ is a successor of $s$. Hence, by possibly adding a vertex and three edges to $G$ and changing some labels, we can assume, without loss of generality, that the internal vertex $r$ that is adjacent to the three external vertices $s$, $m$ and $t$ of our input $st$-plane $3$-tree $G$ is a successor of $m$. 
	
\smallskip\noindent{\bf Inductive hypothesis:} The proof is similar in spirit to, however more involved than, the proof of Theorem~\ref{th:upper-3-trees}. Let $\Delta$ be any triangle with vertices $p_s$, $p_m$ and $p_t$, where $x(p_s)<x(p_m)<x(p_t)$ and $y(p_s)<y(p_m)<y(p_t)$. If $p_m$ lies above the line through $p_s$ and $p_t$, we say that $\Delta$ is of \emph{type A} (see Fig~\ref{fig:dominant-input-1}), otherwise it is of \emph{type B} (see Fig~\ref{fig:dominant-input-2}). Let $D$ and $E$ be closed disks contained in the interior of $\Delta$ such that:
	\begin{itemize}
		\item if $\Delta$ is of type A, then $D$ and $E$ are horizontally aligned, that is, they have the same two vertical tangents, while if $\Delta$ is of type B, then $D$ and $E$ are vertically aligned;
		\item if $\Delta$ is of type A, then $D$ is strictly below and to the right of $p_m$, while if $\Delta$ is of type B, then $D$ is strictly above and to the left of $p_m$; and
		\item $E$ is strictly above and to the right of $p_m$.
	\end{itemize}
	
	\begin{figure}[t]
		\centering
        \hfill
        \begin{subfigure}{0.32\linewidth}
            \centering
            {\includegraphics[scale=.9,page=6]{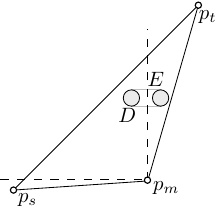}
            \subcaption{\label{fig:dominant-assumption}}}
        \end{subfigure}
        \hfill
        \begin{subfigure}{0.32\linewidth}
            \centering
            {\includegraphics[scale=.9,page=2]{Sink-Dominant.pdf}
            \subcaption{\label{fig:dominant-input-1}}}
        \end{subfigure}
        \hfill
        \begin{subfigure}{0.32\linewidth}
            \centering
            {\includegraphics[scale=.9,page=1]{Sink-Dominant.pdf}
            \subcaption{\label{fig:dominant-input-2}}}
        \end{subfigure}
        \hfill
		\caption{(a) Augmenting $G$ so that the vertex adjacent to the three vertices on the outer face is a successor of two of them. (b) and (c) Triangle $\Delta$ and disks $D$ and $E$ for the input to the induction. In (b) $\Delta$ is of type A, while in (c) it is of type B.}
		\label{fig:dominant}
	\end{figure}

	We prove, by induction on $n$, that $G$ admits a planar straight-line dominance drawing such that:
	
	\begin{itemize}
		\item $s$ lies at $p_s$, $m$ lies at $p_m$, and $t$ lies at $p_t$; 
		\item every internal vertex of $G$ that is a successor of $m$ lies in the interior of $E$; and 
		\item every internal vertex of $G$ that is incomparable to $m$ lies in the interior of $D$.
	\end{itemize}

	Note that, because of the assumption that $r$ is a successor of $m$, no internal vertex of $G$ is a predecessor of $m$.

	The statement clearly implies the theorem. In the base case, in which $n=3$, the triangle $\Delta$ is the required drawing of $G$ and the statement is true. Suppose now that $n>3$. We only show the construction for the case in which $\Delta$ is of Type B, as the other case is analogous. 

\smallskip\noindent{\bf Graph structure:} Recall that $r$ is the unique vertex of $G$ adjacent to $s$, $m$ and $t$, and that the edge $(m,r)$ is directed towards $r$; refer to Fig~\ref{fig:dominant-structure}.

Let $P_m:=(v_0=m,v_1,\dots,v_\ell=r)$ be the longest directed path from $m$ to $r$. Since every vertex is adjacent to $t$ and since $r$ is a successor of $m$, we have that $P_m$ exists and is unique. For $j=1,\dots,\ell$, let $M_j$ be the subgraph of $G$ induced by the vertices inside or on the boundary of the $3$-cycle $(v_{j-1},v_j,t)$ and note that $M_j$ is a sink-dominant $st$-plane $3$-tree. By the fact that $P_m$ is the longest directed path from $m$ to $r$, we have that, if $M_j$ contains internal vertices, then the internal vertex of $M_j$ that is adjacent to $v_{j-1}$, $v_j$ and $t$ is a successor of $v_j$. This implies that $M_j$ can be drawn recursively.

	Also, let $P_s:=(u_0=s,u_1,\dots,u_k=r)$ be the longest directed path from $s$ to $r$ in $G$ that does not pass through $m$. For $i=1,\dots,k$, the subgraph $S_i$ of $G$ induced by the vertices inside or on the boundary of the $3$-cycle $(u_{i-1},u_i,t)$ is a sink-dominant $st$-plane $3$-tree. Further, if $S_i$ contains internal vertices, then the internal vertex of $S_i$ that is adjacent to $u_{i-1}$, $u_i$ and $t$ is a successor of $u_i$, hence $S_i$ can be drawn recursively. 
	
	Since every vertex of $G$ is adjacent to $t$, the interior of the cycle $\mathcal C_{sm}:=P_s \cup P_m \cup (s,m)$ does not contain any vertices, while it might contain some edges (and in fact it does, unless $\mathcal C_{sm}=(s,m,r)$). We are going to draw $\mathcal C_{sm}$ as a convex curve (hence the edges in its interior will not cause crossings).

\begin{figure}[ht]
		\centering
		\includegraphics[scale=.9,page=3]{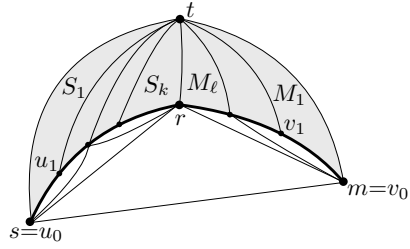}
		\caption{Paths $P_m$ and $P_s$ (as thick lines) and graphs $M_1,\dots,M_\ell,S_1,\dots,S_k$ (with gray interior).}
		\label{fig:dominant-structure}
	\end{figure}

	\begin{figure}[ht]
	\centering
	\includegraphics[scale=1.1,page=4]{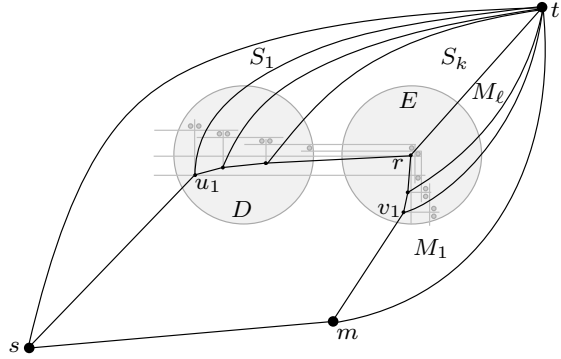}
	\caption{Drawing paths $P_s$ and $P_m$, disks $D^u_1,E^u_1,\dots,D^u_k,E^u_k$, and disks $D^v_1,E^v_1,\dots,D^v_\ell,E^v_\ell$. For the sake of readability, some edges are drawn as curves, as the illustration is mainly meant to represent the relative placement of the vertices of the paths $P_s$ and $P_m$ and of the listed disks.}
	\label{fig:dominant-drawing}
\end{figure}

\smallskip\noindent{\bf Construction:} 	We now draw $P_s$ and $P_m$. We also draw disks inside the triangles representing the cycles $(u_{i-1},u_i,t)$, for $i=1,\dots,k$, and $(v_{j-1},v_j,t)$, for $j=1,\dots,\ell$, so that induction can be applied in order to draw the subgraphs $S_i$ and $M_j$ recursively. Refer to Fig~\ref{fig:dominant-drawing} for an illustration of the relative placement of the vertices of $P_s$ and $P_m$ and of the desired disks.

 	\begin{figure}[ht]
		\centering
		\includegraphics[scale=1.1,page=5]{Sink-Dominant.pdf}
		\caption{Drawing vertex $u_i$.}
		\label{fig:dominant-detail}
	\end{figure}

We start by placing $r$ at the center of the disk $E$. Next, we draw the vertices $u_1,\dots,u_{k-1}$ in this order inside $D$. Let $\sigma_r$ be the intersection of the horizontal line $\ell_r$ through $r$ with~$D$, and let $d_1$ and $d_2$ be the leftmost and rightmost endpoints of $\sigma_r$, respectively. For $i=1,\dots,k-1$, by drawing $u_i$, we complete the drawing of the triangle $\Delta^u_i$ representing cycle $(u_{i-1},u_i,t)$. Then we also place suitable disks $D^u_i$ and $E^u_i$ inside $\Delta^u_i$ so that $S_i$ can be drawn recursively. 
	
When we have to draw $u_{i}$, for some $i\in \{1,\dots,k-1\}$, we assume that (see Fig~\ref{fig:dominant-detail}):
	
\begin{itemize}
    \item {(C1)} the polygonal line $(u_0,\dots,u_{i-1},r)$ is convex and lies below $\ell_r$;
    \item {(C2)} if $i>1$, the line through $u_{i-2}$ and $u_{i-1}$ cuts $\sigma_r$ in its interior, at a point $p_i$; 
    \item {(C3)} if $i>1$, the segment between $u_{i-1}$ and $t$ cuts $\sigma_r$ in its interior, at a point $q_i$; and 		
    \item {(C4)} if $i>1$, the disks $D^u_{i-1}$ and $E^u_{i-1}$ lie inside $D$ and above $\ell_r$.
\end{itemize} 

We denote by $e_i$ be the rightmost point of $E^u_{i-1}$. Note that conditions (C1)--(C4) are vacuous if $i=1$ (i.e., before drawing $u_1$). In that case, for the sake of simplicity of the description, we let $p_i$, $q_i$, and $e_i$ coincide with $d_1$. 

We now explain how to draw $u_i$.  Let $\overline{x}=\max\{x(p_i),x(q_i),x(e_i)\}$, let $\tilde{x}=(x(d_2)+\overline{x})/2$, where $\overline{x}<\tilde{x}<x(d_2)$, and let $\tilde{p}$ be the point of $\sigma_r$ with $x(\tilde{p})=\tilde{x}$. We place $u_i$ at $(\tilde{x},y(r)-\epsilon)$, where $\epsilon>0$ is sufficiently small so that conditions (C1)--(C3) are satisfied when we have to draw $u_{i+1}$. Indeed, if $\epsilon=0$, then $u_i$ would be placed at $\tilde{p}$ and conditions (C1)--(C3) would be trivially satisfied when we have to draw $u_{i+1}$, hence they are also satisfied for some sufficiently small $\epsilon>0$, by continuity. 

We now place the disks $D^u_i$ and $E^u_i$ so that they have radius $\delta$ and centers at $(\tilde{x}\pm \epsilon', y(r)+\epsilon')$, where $\epsilon'>\delta>0$ are sufficiently small so that:
\begin{itemize}
	\item $D^u_i$ and $E^u_i$ lie inside the triangle $\Delta^u_i=(u_{i-1},u_i,t)$;
	\item $D^u_i$ and $E^u_i$ are lower than $D^u_{i-1}$ and $E^u_{i-1}$; and
	\item condition (C4) is satisfied when we have to draw $u_{i+1}$.
\end{itemize} 
Indeed, if $\epsilon'=\delta=0$ such disks would degenerate and coincide with $\tilde{p}$, which is inside $D$ and also inside $\Delta^u_i$, as the segment $\overline{u_{i-1}t}$ cuts $\sigma_r$ at a point $q_i$ to the left of $\tilde{p}$ and the segment $\overline{u_{i}t}$ cuts $\sigma_r$ at a point to the right of $\tilde{p}$. Hence, such disks remain inside $D$ and $\Delta^u_i$ if $\epsilon'>\delta>0$ is sufficiently small, by continuity. Since $\epsilon'>\delta$, disks $D^u_{i}$ and $E^u_{i}$ lie above $\ell_r$, hence ensuring condition (C4). Finally, choosing $\delta+\epsilon'$ smaller than the distance between $E^u_{i-1}$ and $\ell_r$ ensures that $D^u_i$ and $E^u_i$ are lower than $D^u_{i-1}$ and $E^u_{i-1}$. 


We now draw the vertices $v_1,\dots,v_{\ell-1}$ in this order. For $j=1,\dots,\ell-1$, when we draw $v_j$, we have drawn the triangle $\Delta^v_j$ representing cycle $(v_{j-1},v_j,t)$. Then we also show how to place suitable disks $D^v_j$ and $E^v_j$ inside $\Delta^v_j$ so that $M_j$ can be drawn recursively. 
This is done very similarly to the way vertices $u_1,\dots,u_{k-1}$ and disks $D^u_1,E^u_1,\dots,D^u_{k-1},E^u_{k-1}$ were drawn (see again Fig~\ref{fig:dominant-drawing}), so we only highlight the differences here.

\begin{itemize}
	\item First, all such vertices and disks lie inside $E$, rather than inside $D$. 
	\item Second, the role previously played by $\sigma_r$ is now played by a segment $\sigma'_r$ along the vertical line $\ell'_r$ through $r$. The endpoints of $\sigma'_r$ are the lowest intersection point $e_1$ of $\ell'_r$ with the boundary of $E$ and the intersection point of $\ell'_r$ with the horizontal line through $u_1$. This is because the vertices $v_1,\dots,v_{\ell-1}$ and the disks $D^v_1,E^v_1,\dots,D^v_{\ell-1},E^v_{\ell-1},D^v_{\ell}$ have to be placed below (and to the right of) $u_1,\dots,u_{k-1}$, so to satisfy the constraints of a dominance drawing. When a vertex $v_j$ is drawn, the segment $\overline{v_jt}$ crosses the interior~of~$\sigma'_r$.
	\item Third, the triangles $\Delta^v_j$ are of Type A, unlike the triangles $\Delta^u_i$ which are of Type B. Hence, the disks $D^v_j$ and $E^v_j$ are horizontally aligned, to the right of $\sigma'_r$. The disks $D^v_j$ and $E^v_j$ are above and to the left of the disks $D^v_{j-1}$ and $E^v_{j-1}$.
\end{itemize}

	\begin{figure}[t]
	\centering
	\includegraphics[scale=1.1,page=7]{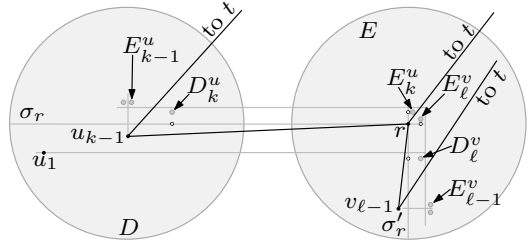}
	\caption{Illustration for the placement of the disks $D^u_{k}$, $E^u_{k}$, $D^v_{\ell}$, and $E^v_{\ell}$. White circles represent initial or intermediate placements for such disks.}
	\label{fig:dominant-final-disks}
\end{figure}

After the vertices $v_1,\dots,v_{\ell-1}$ and the disks $D^v_1,E^v_1,\dots,D^v_{\ell-1},E^v_{\ell-1}$ have been drawn, it only remains to draw the disks $D^u_{k}$ and $E^u_{k}$ inside $\Delta^u_k=(u_{k-1},u_k,t)$ and the disks $D^v_{\ell}$ and $E^v_{\ell}$ inside $\Delta^v_\ell=(v_{\ell-1},v_\ell,t)$. See Fig~\ref{fig:dominant-final-disks}. We have to place $D^u_{k}$ and $E^u_{k}$ above $\sigma_r$ and below $E^u_{k-1}$, with $D^u_{k}$ in $D$ and $E^u_{k}$ in $E$; also, $E^u_{k}$ has to be to the right of $r$. Analogously, we have to place $D^v_{\ell}$ and $E^v_{\ell}$ in $E$, to the right of $\sigma'_r$ and to the left of $E^v_{\ell-1}$, with $E^v_{\ell}$ above $r$. Finally, $E^u_{k}$ has to be above and to the left of $E^v_{\ell}$. 

We can again use continuity arguments to prove that such disk placements exist. Indeed, $\overline{u_{k-1}t}$ cuts the interior of $\sigma_r$, hence $D^u_{k}$ can be initially set to be a point in the interior of $\sigma_r$, to the right of $\overline{u_{k-1}t}$. Analogously, $D^v_{\ell}$ can be initially set to be a point in the interior of $\sigma'_r$ above $\overline{v_{\ell-1}t}$. Disks $E^u_{k}$ and $E^v_{\ell}$ are initially set to coincide with $r$. Now $D^u_{k}$ and $E^u_{k}$ can be moved upward of a sufficiently small distance so that $D^u_{k}$ does not collide with $\overline{u_{k-1}t}$ and remains below $E^u_{k-1}$; note that now $D^u_{k}$ and $E^u_{k}$ are in the interior of $\Delta^u_k$. Analogously, disks $D^v_{\ell}$ and $E^v_{\ell}$ can be moved rightward, of a sufficiently small distance so that they still are to the left of $E^v_{\ell-1}$ and they now both lie in the interior of $\Delta^v_\ell$. Next, we move $E^u_{k}$ rightward and $E^v_{\ell}$ upward so that they are to the right and above $r$, respectively. This movement is sufficiently small so that $E^u_{k}$ remains in $\Delta^u_k$ and $E^v_{\ell}$ in $\Delta^v_\ell$, and so that $E^u_{k}$ remains above and to the left of $E^v_\ell$. Finally, we enlarge the disks so that they have a positive radius. Such a radius can be set to be sufficiently small so that all the above listed properties, which were satisfied before such an enlargement, are still maintained. 

The drawing $\Gamma$ of $G$ is completed by drawing the subgraphs $M_1,\dots,M_\ell$, $S_1,\dots,S_k$ recursively, with triangles $\Delta^v_1,\dots,\Delta^v_\ell,\Delta^u_1,\dots,\Delta^u_k$ representing their outer faces, and with disks  $D^v_1,E^v_1,\dots,D^v_\ell,E^v_\ell,\dots,D^u_1,E^u_1,\dots,D^u_k,E^u_k$ inside such triangles.

\smallskip\noindent{\bf Correctness:} The drawing $\Gamma$ is straight-line by construction. 

The drawings of the subgraphs $M_1,\dots,M_\ell,S_1,\dots,S_k$ are planar by induction. Moreover, the construction guarantees that the cycle $\mathcal C_{sm}$ is represented by a convex curve which keeps in its exterior every edge from a vertex of $\mathcal C_{sm}$ to $t$. It follows that distinct subgraphs among $M_1,\dots,M_\ell,S_1,\dots,S_k$ do not cross each other, that the edges inside or on the boundary of $\mathcal C_{sm}$ do not cross the subgraphs $M_1,\dots,M_\ell,S_1,\dots,S_k$, and that the edges inside or on the boundary of $\mathcal C_{sm}$ do not cross each other. Hence, $\Gamma$ is planar. 

Finally, we prove that $\Gamma$ is a dominance drawing. 
\begin{itemize}
	\item Vertices that are internal to the same subgraph among $M_1,\dots,M_\ell,S_1,\dots,S_k$ are in the correct dominance relationship, by induction. 
    \item Vertices that are internal to distinct subgraphs among $M_1,\dots,M_\ell,S_1,\dots,S_k$ are incomparable. This is because, for any internal vertex $v$ of a subgraph $M_j$ or $S_i$, we have that $t$ is the only vertex incident to the outer face of $M_j$ or $S_i$, respectively, that is a successor of $v$, as a consequence of the fact that $P_s$ and $P_m$ are the longest paths between their end-vertices. By induction, vertices that are internal to distinct subgraphs among $M_1,\dots,M_\ell,S_1,\dots,S_k$ are placed into disks among $D^v_1,E^v_1,\dots,D^v_\ell,E^v_\ell,D^u_1,E^u_1,\dots,D^u_k,$ $E^u_k$. Also, any two disks associated to distinct subgraphs among $M_1,\dots,M_\ell,S_1,\dots,S_k$ are one to the left and above the other one, hence such vertices are in the correct dominance relationship. 
	\item By construction, $v_1,\dots,v_{\ell-1}$ are to the right and below $u_1,\dots,u_{k-1}$, which is the correct dominance relationship as any vertex among $v_1,\dots,v_{\ell-1}$ is incomparable with any vertex among $u_1,\dots,u_{k-1}$.
	\item Also by construction, we have that $v_{j}$ is above and to the right of  $v_{j-1}$, for $j=1,\dots,\ell$, and that $u_{i}$ is above and to the right of  $u_{i-1}$, for $i=1,\dots,\ell$, which is the correct dominance relationship because of the existence of the directed paths  $P_s$ and $P_m$.
	\item Each vertex $u_i$ with $i=1,\dots,k$ is below and to the right of every disk among $D^u_1,E^u_1,\dots,D^u_i$ and is below and to the left of every disk among $E^u_i,D^u_{i+1},\dots,D^u_{k},E^u_{k}$; this is indeed the correct dominance relationship, as all the vertices in the former sequence of disks are incomparable to $u_i$, while all the vertices in the latter sequence of disks are successors of $u_i$. That the vertices among $v_1,\dots,v_{\ell}$ are in the correct dominance relationship with respect to vertices inside disks $D^v_1,E^v_1,\dots,D^v_\ell,E^v_\ell$ can be argued similarly.
	\item Each vertex $u_i$ with $i=1,\dots,k-1$ is above and to the left of every disk among $D^v_1,E^v_1,\dots,D^v_\ell$; this is indeed the correct dominance relationship, as $u_i$ is incomparable to every vertex internal to a subgraph $M_j$ with $j=1,\dots,\ell$, with the exception of the successors of $r$ in $M_\ell$, which are also successors of $u_i$; these vertices are in $E^v_\ell$, which is indeed above and to the right of $u_i$. Similarly, each vertex $v_j$ with $j=1,\dots,\ell-1$ is in the correct dominance relationship with respect to every vertex internal to a subgraph $S_i$ with $i=1,\dots,k$. \qedhere
\end{itemize}    
\end{proof}

Clearly, an analogous result holds true for $st$-plane $3$-trees in which the source is adjacent to every vertex.


\section{Left-non-transitive $st$-plane Graphs} \label{se:non-transitive}

We now consider \emph{left-non-transitive $st$-plane graphs}. These are the $st$-plane graphs such that the left boundary of every face is not a single edge. We show the following.

\begin{theorem} \label{th:left-non-transitive}
Left-non-transitive $st$-plane graphs admit planar straight-line dominance drawings.
\end{theorem}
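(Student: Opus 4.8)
\textbf{Overall approach and the ear decomposition.} The plan is to construct the drawing incrementally along an ear decomposition of $G$ tailored to its $st$-orientation, using the left-non-transitivity to keep the drawing a dominance drawing after every ear is added. Let $P_0$ be the rightmost directed $s$-$t$ path of $G$, i.e.\ the right boundary of the outer face. I claim that $G=P_0\cup P_1\cup\dots\cup P_k$, where each $P_i$ (for $i\ge 1$) is a directed path whose endpoints $a_i,b_i$ belong to $G_{i-1}:=P_0\cup\dots\cup P_{i-1}$, whose internal vertices are new, and which is drawn in the outer face of $G_{i-1}$ attached to a subpath of the left boundary of the outer face of $G_{i-1}$ joining $a_i$ to $b_i$; equivalently, adding $P_i$ ``uncovers'' exactly one further internal face $f_i$ of $G$, which has $P_i$ as its left boundary and the attaching subpath as its right boundary. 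Such an ordering of the faces of $G$ follows from a peeling argument: in a left-non-transitive $st$-plane graph with at least one internal face, the left boundary of the outer face contains a degree-$2$ vertex, and a maximal degree-$2$ subpath through it serves as the last ear $P_k$; removing the internal vertices of $P_k$ merges the face to its right into the outer face and leaves a smaller left-non-transitive $st$-plane graph, to which one recurses. The hypothesis enters precisely here: since no face has a single-edge left boundary, every ear $P_i$ has length at least $2$, hence a nonempty set of new internal vertices, which is what makes the peeling---and the placement below---work.

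\textbf{The incremental construction.} By induction I maintain a planar straight-line dominance drawing $\Gamma_i$ of $G_i$ with pairwise distinct coordinates (possible, cf.\ Lemma~\ref{le:strict}) such that the left boundary $F_i$ of its outer face---which, being a directed path, is drawn as an $x$- and $y$-monotone polyline---has no vertex or edge of $\Gamma_i$ in the open region to its north-west; this holds for $\Gamma_0$ (a path) and I preserve it. To insert $P_{i+1}=(a,c_1,\dots,c_\ell,b)$, attached to the subpath $\pi=(a=w_0,w_1,\dots,w_m=b)$ of $F_i$, I place $c_1,\dots,c_\ell$, with strictly increasing $x$- and $y$-coordinates, inside a tiny box nestled just above $\pi$ and contained in $(x(a),X)\times(Y,y(b))$, where $X$ is the smallest $x$-coordinate among the finitely many vertices of $G_i$ that must lie to the right of the new vertices and $Y$ is the largest $y$-coordinate among those that must lie below them; one checks $X>x(a)$ and $Y<y(b)$, so the box is nonempty, and that---choosing the $c_j$'s close enough to $\pi$---the box and the segments $a\to c_1,\,c_1\to c_2,\dots,c_\ell\to b$ stay above $\pi$ inside the north-west region, so $\Gamma_{i+1}$ is planar and straight-line and the new left boundary $F_{i+1}$ (obtained from $F_i$ by replacing $\pi$ with $P_{i+1}$) is again monotone with empty north-west region.

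\textbf{Dominance correctness.} The key point is that the sets of predecessors and successors of each $c_j$ in the \emph{final} graph $G$ are already fixed in $G_{i+1}$: every later ear attaches to the then-current left boundary, which is a directed path containing $c_j$, so it can only connect $c_j$ to vertices it already reaches or is reached from (or to brand-new vertices, placed later). Hence it suffices to place $c_j$ correctly with respect to $G_{i+1}$. Since $a$ and $b$ lie on $F_i$, the left boundary of $G_i$, every vertex $z$ of $G_i$ incomparable to $c_j$ (in particular every $w_1,\dots,w_{m-1}$ and every vertex ``strictly between'' $a$ and $b$) satisfies $x(z)>x(a)$ and $y(z)<y(b)$ in $\Gamma_i$---because an incomparable vertex cannot lie to the left of the left boundary---so such $z$ lies above and to the left of the box, which is the correct dominance position for a vertex incomparable to $c_j$ and lying to its right in the embedding; predecessors of $a$ lie below-left of the box and successors of $b$ above-right, so all of them are in the correct relation to each $c_j$, and also $a\to c_1\to\dots\to c_\ell\to b$ is placed monotonically. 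Thus $\Gamma_{i+1}$ is a dominance drawing, and $\Gamma_k$ is the desired drawing of $G$.

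\textbf{Main obstacle.} I expect the decomposition lemma to be the crux---proving that the faces of a left-non-transitive $st$-plane graph can be peeled one ear at a time with the ``right boundary on the current frontier'' property, which is exactly where the absence of single-edge left boundaries is indispensable, since a single left transitive edge would force an ear of length one that becomes the entire left boundary of a face, breaking both the peeling and the placement scheme (this is the announced interaction between left and right transitive edges). The quantitative bookkeeping---how small to take the box, where to put it, and how to choose the coordinates of the $c_j$---is routine and handled, as in the proofs of Theorems~\ref{th:deep-3-trees} and~\ref{th:sink-dom-3-trees}, by verifying that the finitely many strict inequalities needed for non-crossing and for dominance hold in the degenerate limiting placement, hence for all sufficiently small positive choices.
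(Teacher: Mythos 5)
Your proposal is correct and follows essentially the same route as the paper: the same right-to-left path/ear decomposition starting from the right boundary of the outer face, the same use of left-non-transitivity to guarantee each new path has internal vertices, the same placement of those vertices in a thin region bounded by the nearest ``to-the-right'' $x$-coordinate and nearest ``below'' $y$-coordinate, and the same dominance case analysis (the paper obtains the decomposition via a DFS of the dual rather than your peeling argument, but this is immaterial). One small slip: given your inequalities $x(z)>x(a)$ and $y(z)<y(b)$ and the definition of the box, an incomparable vertex $z$ of $G_i$ ends up \emph{below and to the right} of the box, not above and to the left of it -- which is still a valid incomparable configuration, so the argument stands.
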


\begin{proof}
Consider a left-non-transitive $st$-plane graph $G$. We are going to use a \emph{right-to-left path decomposition of $G$}. This consists of a sequence of directed paths $P_1,P_2,\dots,P_k$ such that the following properties are satisfied.
\begin{itemize}
\item $P_1$ is the right boundary of the outer face of $G$;
\item for $i=1,\dots,k$, the graph $G_i:=P_1 \cup P_2 \cup \dots \cup P_i$ is an $st$-plane graph;
\item for $i=2,\dots,k$, the path $P_i$ is the left boundary of a face of $G$ whose right boundary belongs to the left boundary of the outer face of $G_{i-1}$; the internal vertices of $P_i$ do not belong to $G_{i-1}$; and
\item $G_k=G$.
\end{itemize}
This decomposition can be found by ordering the faces of $G$ as in a DFS of the dual of $G$; for a formal proof see, e.g.,~\cite{DBLP:books/sp/Mehlhorn84b}. 

We are going to construct a planar straight-line dominance drawing $\Gamma_i$ of $G_i$, for $i=2,\dots,k$. Then $\Gamma_k$ is the desired drawing of $G$.

\begin{figure}[ht]
	\centering
	\includegraphics[scale=.9]{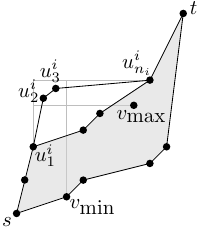}
	\caption{Constructing $\Gamma_i$ from $\Gamma_{i-1}$. The interior of $\Gamma_{i-1}$ is not shown and shaded gray.}
	\label{fig:left-non-transitive}
\end{figure}

For $i=1,\dots,k$, let $P_i=(u^i_1,u^i_2,\dots,u^i_{n_i})$. Since $G$ is left-non-transitive, $n_i \geq 3$ holds. The drawing $\Gamma_1$ of $G_1=P_1$ is constructed as any straight-line drawing such that $x(u^1_1)<x(u^1_2)<\dots<x(u^1_{n_1})$ and $y(u^1_1)<y(u^1_2)<\dots<y(u^1_{n_1})$. Clearly, $\Gamma_1$ is a planar dominance drawing. Now suppose that a planar straight-line dominance drawing $\Gamma_{i-1}$ of $G_{i-1}$ has been constructed, for some $i\in \{2,\dots,k\}$, so that no two vertices have the same $x$- or $y$-coordinate. We construct a planar straight-line dominance drawing $\Gamma_{i}$ of $G_{i}$ from $\Gamma_{i-1}$ as follows; refer to Fig~\ref{fig:left-non-transitive}. Recall that $u^i_1$ and $u^i_{n_i}$ are vertices on the left boundary of $G_{i-1}$, while the internal vertices of $P_i$ need to be inserted into $\Gamma_{i-1}$ in order to define $\Gamma_{i}$. Among all the vertices of $G_{i-1}$ that lie to the right of $u^i_1$ in $\Gamma_{i-1}$, let $v_{\textrm{min}}$ be the one with smallest $x$-coordinate. Also, among all the vertices of $G_{i-1}$ that lie below $u^i_{n_i}$ in $\Gamma_{i-1}$, let $v_{\textrm{max}}$ be the one with largest $y$-coordinate. Note that $x(v_{\textrm{min}}) \leq x(u^i_{n_i})$ and $y(u^i_1) \leq y(v_{\textrm{max}})$. We assign coordinates to the internal vertices of $P_i$ so that $x(u^i_1)<x(u^i_2)<\dots<x(u^i_{n_i-1})<x(v_{\textrm{min}})$ and $y(v_{\textrm{max}})<y(u^i_2)<\dots<y(u^i_{n_i-1})<y(u^i_{n_i})$. This completes the construction of~$\Gamma_{i}$. 

We prove the planarity of $\Gamma_{i}$. Since the drawing of $G_{i-1}$ in $\Gamma_i$ coincides with $\Gamma_{i-1}$ and since $\Gamma_{i-1}$ is planar, it suffices to prove that the edges of $P_i$ do not cross each other and do not cross $\Gamma_{i-1}$. The former follows directly from the fact that $x(u^i_1)<x(u^i_2)<\dots<x(u^i_{n_i-1})<x(u^i_{n_i})$ and $y(u^i_1)<y(u^i_2)<\dots<y(u^i_{n_i-1})<y(u^i_{n_i})$, by construction. We now deal with the latter. 
\begin{itemize}
	\item First, we prove that the edge $(u^i_1,u^i_2)$ does not cross $\Gamma_{i-1}$. Let $(u^i_1,w)$ be the edge of $G_{i-1}$ outgoing from $u^i_1$ and incident to the left boundary of $G_{i-1}$. Such an edge has the outer face of $\Gamma_{i-1}$ to its left, when traversed from $u^i_1$ to $w$. By construction, we have $x(u^i_2)<x(v_{\textrm{min}})\leq x(w)$ and $y(w)\leq y(v_{\textrm{max}})<y(u^i_2)$, hence the interval of $x$-coordinates spanned by the edge $(u^i_1,u^i_2)$ is a subset of the one spanned by the edge $(u^i_1,w)$ and the slope of the edge $(u^i_1,u^i_2)$ is larger than the one of the edge $(u^i_1,w)$. It follows that $(u^i_1,u^i_2)$ lies in the outer face of $\Gamma_{i-1}$, and hence does not cross $\Gamma_{i-1}$.
	\item The proof that the edge $(u^i_{n_i-1},u^i_{n_i})$ does not cross $\Gamma_{i-1}$ is analogous. 
	\item Finally, consider any edge $(u^i_j,u^i_{j+1})$ with $2\leq j\leq n_{i}-2$. By construction, the interval of $x$-coordinates spanned by $(u^i_j,u^i_{j+1})$ is a subset of the interval $(x(u^i_1),x(w))$, where $w$ is defined as above. Also by construction, we have that $y(u^i_1)<y(w)\leq y(v_{\textrm{max}})<y(u^i_j)<y(u^i_{j+1})$. Hence, the edge lies above the edge $(u^i_1,w)$, thus in the outer face of $\Gamma_{i-1}$, which is not crossed by it.
\end{itemize}

We now prove that $\Gamma_{i}$ is a dominance drawing. Since the drawing of $G_{i-1}$ in $\Gamma_i$ coincides with $\Gamma_{i-1}$ and since $\Gamma_{i-1}$ is a dominance drawing, it suffices to prove that the placement of the internal vertices of $P_i$ complies with the dominance relationships they are involved in. Consider any internal vertex $u^i_j$ of $P_i$. For $h=1,\dots,j-1$, vertex $u^i_h$ is a predecessor of $u^i_j$ and indeed we have $x(u^i_h)<x(u^i_j)$ and $y(u^i_h)<y(u^i_j)$, by construction. Analogously, for $h=j+1,\dots,n_i$, vertex $u^i_h$ is a successor of $u^i_j$ and indeed we have $x(u^i_j)<x(u^i_h)$ and $y(u^i_j)<y(u^i_h)$, by construction. Consider any vertex $w$ of $G_{i-1}$ different from $u^i_1$ and $u^i_{n_i}$. 

\begin{itemize}
	\item First, if $w$ is a predecessor of $u^i_1$, then it is also a predecessor of $u^i_j$ and indeed we have $x(w)<x(u^i_j)$ and $y(w)<y(u^i_j)$, given that $x(w)<x(u^i_1)$ and $y(w)<y(u^i_1)$ (since $\Gamma_{i-1}$ is a dominance drawing) and that $x(u^i_1)<x(u^i_j)$ and $y(u^i_1)<y(u^i_j)$ (as proved above). 
	\item Second, if $w$ is a successor of $u^i_{n_i}$, then it is also a successor of $u^i_j$ and indeed we have $x(u^i_j)<x(w)$ and $y(u^i_j)<y(w)$ given that $x(u^i_{n_i})<x(w)$ and $y(u^i_{n_i})<y(w)$ (since $\Gamma_{i-1}$ is a dominance drawing) and that $x(u^i_j)<x(u^i_{n_i})$ and $y(u^i_j)<y(u^i_{n_i})$ (as proved above). 
	\item Finally, if $w$ is neither a predecessor of $u^i_1$ nor a successor of $u^i_{n_i}$, then it is incomparable with $u^i_j$. Note that $x(w)>x(u^i_1)$, as $x(w)<x(u^i_1)$ would imply $y(w)<y(u^i_1)$ (given that $u^i_1$ is on the left boundary of $G_{i-1}$), which is not possible since $w$ is incomparable with $u^i_1$ and $\Gamma_{i-1}$ is a dominance drawing. Analogously, we have $y(w)<y(u^i_{n_i})$. By construction, we have $x(u^i_j)<x(v_{\textrm{min}})\leq x(w)$ and $y(u^i_j)>y(v_{\textrm{max}})\geq y(w)$, hence the placement of $w$ and $u^i_j$ complies with their dominance relationship. 
\end{itemize}

This concludes the proof that  $\Gamma_i$ is a planar straight-line dominance drawing, hence the induction and the proof of the theorem.
\end{proof}

Clearly, an analogous result holds true for \emph{right-non-transitive $st$-plane graphs}, which are $st$-plane graphs such that the right boundary of every face is not a single edge.


\section{$st$-plane Span-$2$ Graphs} \label{se:span-2}

A \emph{level graph} is a directed graph $G = (V,E)$ together with a function $\ell: V\rightarrow \{1,2,\dots,k\}$ such that $\ell(u)<\ell(v)$ for every edge $(u,v)\in E$. We say that an edge $(u,v)$ of $G$ has \emph{span}~$\sigma$ if $\ell(v)-\ell(u)=\sigma$. We call \emph{span-$\sigma$ graph} a level graph such that every edge has span at most~$\sigma$. Span-$1$ graphs are usually called \emph{proper level graphs} and are widely studied in literature. An \emph{$st$-planar level graph} is a level graph $(G,\ell)$ such that $G=(V,E)$ is an $st$-planar graph that admits an upward planar drawing in which $y(u)=\ell(u)$, for every vertex $u\in V$. Note that $st$-planar span-$1$ graphs do not have transitive edges, hence they admit planar straight-line dominance drawings \cite{DBLP:journals/dcg/BattistaTT92}. In this section, we study $st$-planar span-$2$ graphs. Figure~\ref{fig:level-dom-a} depicts one of such graphs, in which each {\em level}, i.e., the set of vertices mapped to the same integer by the function $\ell$, is represented by a dotted line. We prove the following theorem. 


\begin{theorem}
\label{th:span2}
Every $st$-planar span-$2$ graph admits a planar straight-line dominance drawing.
\end{theorem}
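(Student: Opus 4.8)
The plan is to process the graph level by level, from bottom to top, maintaining a planar straight-line dominance drawing of the subgraph induced by the first $h$ levels, where at each step the vertices of level $h$ are placed "high and to the right" of everything drawn so far, in an order consistent with the left-to-right order they appear along the current upper boundary. The key structural fact to exploit is that in a span-$2$ graph, every transitive edge $(u,v)$ has $\ell(v)-\ell(u)=2$ and spans exactly one intermediate level; moreover, between $u$ and $v$ there is a directed path $u,w,v$ with $\ell(w)=\ell(u)+1$, and in the $st$-planar embedding all such bypassed vertices $w$ lie on one side of the edge. This is what makes the interaction between left and right transitive edges — the obstruction identified in Section~\ref{se:non-transitive} — controllable: a transitive edge only has to "jump over" vertices on a single level.

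First I would set up the induction. Fix an upward planar drawing of $(G,\ell)$ with $y(u)=\ell(u)$; read off, for each level $h$, the left-to-right order $w^h_1,\dots,w^h_{p_h}$ of its vertices and, for each edge, whether it is incoming from the left or the right and whether it is transitive. Let $G_h$ be the subgraph induced by levels $1,\dots,h$. I want a drawing $\Gamma_h$ of $G_h$ that is planar, straight-line, a dominance drawing, has all vertices of level $h$ forming the current "upper-right staircase" (each strictly above and to the right of all previously placed vertices, and increasing in both coordinates along $w^h_1,\dots,w^h_{p_h}$), and in which this staircase is drawn as a convex chain whose slopes are steep enough that the outer face above it is a large wedge. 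The base case $\Gamma_1$ is a single increasing staircase of the level-$1$ vertices (in practice only $s$, but the statement is phrased for general level graphs, so I allow several).

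The inductive step inserts level $h+1$. Each vertex $w^{h+1}_j$ has some incoming edges from level $h$ (non-transitive) and possibly one transitive edge from level $h-1$. Place the vertices $w^{h+1}_1,\dots,w^{h+1}_{p_{h+1}}$ on a new, even steeper convex staircase, far above and to the right of the level-$h$ staircase, with $x$- and $y$-coordinates strictly increasing in $j$, and — crucially — fine-tuned so that, reading along level $h+1$ left to right, each new vertex lies just past (in $x$) the rightmost level-$h$ vertex it must be a successor of, and just above (in $y$) the highest level-$(h-1)$ vertex it must dominate, but not past any vertex it must be incomparable to. Because every new edge has span $1$ or $2$, "the rightmost predecessor on level $h$" and "the highest predecessor on level $h-1$" are well-defined and, by planarity of the fixed embedding together with the convexity/steepness invariant, the resulting segments (both the short edges to level $h$ and the transitive edges skipping level $h$) stay in the wedge-shaped outer region and do not cross each other or $\Gamma_h$; this is exactly the geometric argument used in the proof of Theorem~\ref{th:left-non-transitive}, now carried out one level at a time. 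Dominance is checked case by case: predecessors via levels $\le h$ are below-left by the staircase monotonicity and the induction; successors of a new vertex are all on higher levels, hence placed later and farther up-right; and a vertex $z$ on level $\le h$ that is incomparable to $w^{h+1}_j$ is separated from it by the choice of thresholds — $z$ is either to the right of (hence higher than would conflict) or below $w^{h+1}_j$ in the anti-chain sense, using that the fixed embedding records on which side each non-edge sits.

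The main obstacle I anticipate is reconciling the two competing demands placed on the $x$-coordinate of a new vertex: a transitive edge $(u,w^{h+1}_j)$ with $u$ on level $h-1$ forces $w^{h+1}_j$ far to the right (past all level-$h$ vertices it "skips"), while incomparability with some level-$h$ vertex $z$ sitting to the right of those skipped vertices forbids $w^{h+1}_j$ from going that far right. I expect this to be resolvable precisely because span is $2$: a transitive edge skips only level-$h$ vertices, and the $st$-planar embedding forces every level-$h$ vertex that is a successor of $u$ (the ones $w^{h+1}_j$ must dominate) to appear in the left-to-right order before every level-$h$ vertex incomparable to $u$ on the relevant side — so the threshold from transitivity never exceeds the threshold from incomparability. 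Making this ordering claim precise, and symmetrically handling left- versus right-transitive edges together with the dual $y$-coordinate constraint coming from level $h+2$ looking back down, is the delicate combinatorial core of the argument; everything else is the by-now standard "draw on a steep convex staircase in a wedge" bookkeeping.
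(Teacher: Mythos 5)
Your plan runs into a problem before the delicate part you flag is even reached: the staircase invariant is incompatible with the definition of a dominance drawing. Any two vertices on the same level are incomparable (a directed path between them would need net span~$0$ while every edge has span~$\geq 1$), so in a dominance drawing the vertices of level $h$ must form an anti-chain in the coordinate order: as $x$ increases along $w^h_1,\dots,w^h_{p_h}$, $y$ must strictly \emph{decrease}. Your requirement that the level-$h$ vertices be ``increasing in both coordinates'' forces $w^h_{p_h}$ to dominate $w^h_1$, which is forbidden. The same holds for ``each strictly above and to the right of all previously placed vertices'': a level-$(h{+}1)$ vertex is comparable to a level-$h$ vertex only if they are adjacent, so it must \emph{not} dominate the non-adjacent ones. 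Your later ``fine-tuning'' clause (``not past any vertex it must be incomparable to'') silently contradicts the ``far above and to the right of the level-$h$ staircase'' clause; once the invariant is corrected to a descending anti-chain with each new vertex wedged into the $x$-gap after its rightmost in-neighbour and the $y$-gap above its leftmost in-neighbour, the entire ``new edges live in an outer wedge'' planarity argument evaporates, because the new vertices are no longer in an outer wedge at all.

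The surviving difficulty is then exactly the one your sketch does not address: a transitive edge $(u,z)$ with $\ell(u)=h-1$ has its tail buried two layers deep, and its straight-line segment must thread \emph{between} consecutive level-$h$ vertices rather than stay in any outer region. (Your ``ordering claim'' about successors versus incomparable vertices of $u$ on level $h$ is not the crux; dominance along transitive edges is essentially free since $z$ already dominates a common level-$h$ neighbour of $u$ and $z$, which dominates $u$.) The paper's proof is organized quite differently precisely to control this: it first augments the graph so that every transitive edge is flanked by two degree-two dummy vertices on the intermediate level, then inserts vertices one at a time from the sink downwards with a five-way case analysis on the outgoing-edge structure, proves that each transitive edge lies inside a strictly convex quadrilateral $(u,w_z,z,w'_z)$ whose interior contains only degree-two ``dangling'' vertices, relaxes the target to an \emph{almost}-embedding-preserving drawing (the transitive edge may land anywhere among its dangling neighbours), and perturbs any dangling vertex that happens to fall on the segment. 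Some analogue of each of these devices---or a genuinely new argument for why the segment $(u,z)$ clears the level-$h$ layer---is needed to complete your level-by-level approach; as written, the proposal does not contain one.
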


\begin{proof}
Let $(G,\ell)$ be an $st$-planar span-$2$ graph. Let $\Psi$ be an upward planar drawing of $G$ in which $y(u)=\ell(u)$, for every vertex $u\in V$. Let~$\mathcal E_G$ be the embedding of~$G$ corresponding to~$\Psi$. We fix $\mathcal E_G$ as the plane embedding of $G$, so that it becomes an $st$-plane span-$2$ graph. The planar straight-line dominance drawing we are going to construct is {\em almost} going to respect~$\mathcal E_G$, as will be formally specified soon. 
 
We define a new $st$-plane span-$2$ graph $(H,\ell')$ as follows. Initialize $(H,\ell')=(G,\ell)$. For each edge $e=(u,z)$ of $G$ with span~$2$, we add two new directed paths $(u,l_{uz},z)$ and $(u,r_{uz},z)$ to $H$. We obtain a plane embedding~$\mathcal E_H$ of~$H$ from~$\mathcal E_G$  so that $e$ is incident to two degree-$3$ faces, one delimited by the path $(u,l_{uz},z)$ and one by the path $(u,r_{uz},z)$. Also, we set $\ell'(l_{uz})=\ell'(r_{uz})=\ell'(z)-1(=\ell'(u)+1)$. See for example the edges $(1,12)$ and $(2,21)$ in Fig.~\ref{fig:level-dom-a}. Note that, in $(H,\ell')$, an edge has span~$1$ if and only if it is non-transitive, and span~$2$ if and only if it is transitive. Also, transitive edges are incident to two faces of degree~$3$. In particular, we are going to use the fact that, for every vertex of $H$, the leftmost and rightmost incoming/outgoing edges are non-transitive.

\begin{figure}[ht]
    \centering
    \begin{subfigure}{0.48\linewidth}
        \centering
        {\includegraphics[page=1, scale=.95]{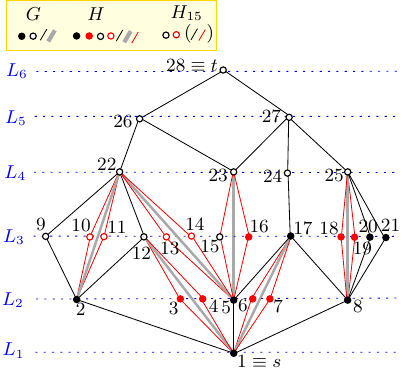}}
        \subcaption[]{}
        \label{fig:level-dom-a}
    \end{subfigure}
    \begin{subfigure}{0.48\linewidth}
        \centering
        {\includegraphics[page=3, scale=.95]{img/level-dominance-revised.pdf}}
        \subcaption[]{}
        \label{fig:level-dom-b}
    \end{subfigure}
    \caption{(a) The augmentation in Theorem~\ref{th:span2}. The initial $st$-plane 2-span graph~$G$ consists of all the black and white vertices, while red-colored vertices and edges are not in~$G$. The graph~$H$ consists of all the depicted vertices and edges. The graph~$H_{15}$ is the subgraph of $H$ induced by the white vertices. (b) The planar straight-line dominance drawing of~$H$ constructed by our algorithm.}
    \label{fig:level-dom}
\end{figure}


Given a planar straight-line dominance drawing $\Gamma_H$ of $H$ (not necessarily respecting~$\mathcal E_H$), it is possible to obtain a planar straight-line dominance drawing $\Gamma_G$ of~$G$ by just removing from $\Gamma_H$ vertices in~$H$ that are not in~$G$, as well as their incident edges. Indeed, for any two vertices $a$ and $b$ of $G$, we have that $a$ is a predecessor of $b$ in~$G$ if and only if $a$ is a predecessor of $b$ in $H$; namely, if a directed path from $a$ to $b$ in $H$ passes through a vertex $l_{uz}$ not in~$G$, then it contains a directed path $(u,l_{uz},z)$, and then the edge $(u,z)$, which belongs to~$G$, can be used in the directed path in place of $(u,l_{uz},z)$. This implies that $\Gamma_G$ is a dominance drawing of~$G$. It remains to prove that $H$ admits a planar straight-line dominance drawing.

Let $\vone$ be the set of vertices of $H$ with one outgoing edge, and let $\vtwo$ be the set of vertices of $H$ with at least two outgoing edges. We define the following partition $\{\vlr,\vl,\vr,\vm\}$ of $\vone$.  Let $v$ be a vertex of $\vone$ and let $e=(v,w)$ be its only incident outgoing edge.

\begin{itemize}
\item If $e$ is the only incoming edge of $w$ in $H$, then $v\in \vlr$.
\item Otherwise, $w$ has more than one incoming edge in $H$. If $e$ is the leftmost or rightmost incoming edge of $w$ in~$\mathcal E_H$, then we have $v\in \vl$ or $v\in \vr$, respectively, otherwise we have $v\in \vm$.
\end{itemize}

See for example \cref{fig:level-dom-a}, where we have $\vtwo=\{1,2,5,8,23\}$, $\vlr=\{17\}$,  $\vl=\{9,15,18,22,26\}$,  $\vr=\{4,14,16,21,25,27\}$, and  $\vm=\{3,6,7,10,11,12,13,19,20,24\}$. 

For $i=1,\dots,k$, let $L_i$ be the sequence of vertices in $H$ defined as follows. First, $L_i$ contains all and only the vertices $v$ such that $\ell'(v)=i$. Second, the vertices in $L_i$ are ordered according to their $x$-coordinates in an upward planar drawing of $H$ that respects~$\mathcal E_H$ and in which $y(u)=\ell'(u)$, for every vertex $u\in V(H)$. See, for example, the sequence $L_2=[2,3,4,5,6,7,8]$ in \cref{fig:level-dom-a}. We define a total order $\vdash$ of the vertices of $H$ so that vertices in $L_i$, with $i\in \{1,\dots,k\}$, are consecutive and ordered as in $L_i$, and so that vertices in $L_{i+1}$ precede vertices in $L_{i}$. For example, in \cref{fig:level-dom-a}, the vertices of $H$ have the following order: $\{28,$ $26,$ $27,$ $22,$  $23,$ $24,$ $25,$  $9,$ $10,$ $11,$  $12,$ $13,$ $14,$  $15,$ $16,$ $17,$  $18,$ $19,$ $20,$ $21,$  $2,$ $3,$ $4,$  $5,$ $6,$ $7,$ $8,$ $1\}$. Let $H_u$ be the subgraph of $H$ induced by $u$ and by every vertex that precedes~$u$ in $\vdash$. For example, in \cref{fig:level-dom-a}, the graph $H_{15}$ is the one induced by the white vertices. Also, let~$\mathcal E_{H_u}$ be the embedding of $H_u$ induced by~$\mathcal E_H$.

For each transitive edge $(u,z)$ of $H$, we define $\mathcal P^{uz} =[p^{uz}_1,p^{uz}_2,\dots,p^{uz}_b]$ and $\mathcal Q^{uz} =[q^{uz}_1,q^{uz}_2,\dots,q^{uz}_c]$ as the maximal sequences of vertices such that:
\begin{itemize}
\item each vertex in $\mathcal P^{uz}\cup \mathcal Q^{uz}$ belongs to $\vm\cap L_{i+1}$ and is a neighbor of both $u$ and $z$;
\item the edges $(p^{uz}_1,z), (p^{uz}_2,z), \dots, (p^{uz}_b,z), (u,z),(q^{uz}_1,z), (q^{uz}_2,z), \dots, (q^{uz}_c,z)$ appear consecutively in this counter-clockwise order around $z$; and 
\item the edge $(p^{uz}_1,z)$ is neither the leftmost edge incoming into $z$ nor the leftmost edge outgoing from $u$, and  the edge $(q^{uz}_c,z)$ is neither the rightmost edge incoming into $z$  nor the rightmost edge outgoing from $u$. 
\end{itemize}

Each of $\mathcal P^{uz}$ and $\mathcal Q^{uz}$ might be empty. For example, $\mathcal P^{uz}$ is empty if the face to the left of $(u,z)$ is a triangular face whose vertex different from $u$ and $z$ is in $\vl$. In \cref{fig:level-dom-a}, we have $\mathcal P^{(2)(22)}=[10]$, $\mathcal Q^{(2)(22)}=[11]$, $\mathcal P^{(8)(25)}=[]$, and $\mathcal Q^{(8)(25)}=[19,20]$. 

For a vertex $w$ of $H$, a planar straight-line dominance drawing $\Gamma_w$ of $H_w$ is \emph{almost-embedding-preserving} ({\sc aep}, for short) if it respects~$\mathcal E_{H_w}$ except that, for each transitive edge $(u,z)$: (i) the edges $(p^{uz}_1,z), (p^{uz}_2,z), \dots, (p^{uz}_b,z), (u,z), (q^{uz}_1,z), (q^{uz}_2,z), \dots, (q^{uz}_c,z)$ appear consecutively and in this order around $z$, except that $(u,z)$ might appear at any place of such a sequence; and (ii) the edges $(u,p^{uz}_1), (u,p^{uz}_2), \dots, (u,p^{uz}_b), (u,z), (u,q^{uz}_1), (u,q^{uz}_2), \dots, (u,q^{uz}_c)$ appear consecutively and in this order around $u$, except that $(u,z)$ might appear at any place of such a sequence.

We now describe how to construct an {\sc aep} planar straight-line dominance drawing $\Gamma$ of~$H$; see \cref{fig:level-dom-b} for an example of a drawing constructed by our algorithm. For any vertex~$u$ of~$H$, the algorithm constructs an {\sc aep} planar straight-line dominance drawing~$\Gamma_u$ of~$H_u$ by augmenting the drawing $\Gamma_{\tilde{u}}$ of $H_{\tilde{u}}$, where $\tilde{u}$ is the vertex immediately preceding $u$ in $\vdash$. Eventually, the algorithm constructs a drawing~$\Gamma$ of~$H$ which coincides with $\Gamma_s$, where $s$ is the source of $H$ and the last vertex in $\vdash$. After describing the construction of $\Gamma_u$, we prove the correctness of the construction, that is, we prove that $\Gamma_u$ is an {\sc aep} planar straight-line dominance drawing.

\smallskip
\noindent
\textbf{Construction:}  We construct $\Gamma_t$, where $t$ is the sink of $H$, by placing $t$ at any point of the plane; note that $t$ is the first vertex in $\vdash$, hence it is the only vertex of $H_t$. Let now $u$ be a vertex of $H$, let $\tilde{u}$ be the vertex immediately preceding $u$ in $\vdash$, and assume that we already constructed $\Gamma_{\tilde{u}}$. We show how to construct $\Gamma_u$. Let $(u,v)$ and $(u,v')$ be the leftmost and rightmost outgoing edges of $u$ in $H$, where $v'=v$ if $u$ has one outgoing edge. Recall that $(u,v)$ and $(u,v')$ are not transitive edges. Also, if $v'=v$, then $u$ is not incident to any transitive edge. We have 5 cases, depending on which of the sets $\vtwo,\vlr,\vl,\vr,\vm$ vertex $u$ belongs to. See Figures \ref{fig:span2-placement14} and \ref{fig:span2-placement5}. 

\begin{figure}[ht]
    \centering
    \begin{subfigure}{0.24\linewidth}
        \centering
        {\includegraphics[page=1, scale=0.6]{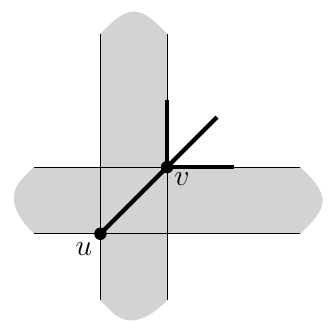}}
        \subcaption[]{}
        \label{fig:span2-placement1}
    \end{subfigure}
    \begin{subfigure}{0.24\linewidth}
        \centering
        {\includegraphics[page=2, scale=0.6]{img/Span2-Placement.pdf}}
        \subcaption[]{}
        \label{fig:span2-placement2}
    \end{subfigure}
    \begin{subfigure}{0.24\linewidth}
        \centering
        {\includegraphics[page=3, scale=0.6]{img/Span2-Placement.pdf}}
        \subcaption[]{}
        \label{fig:span2-placement3}
    \end{subfigure}
    \begin{subfigure}{0.24\linewidth}
        \centering
        {\includegraphics[page=4, scale=0.6]{img/Span2-Placement.pdf}}
        \subcaption[]{}
        \label{fig:span2-placement4}
    \end{subfigure}
    \caption{Placement of $u$ if $u\in \vlr \cup \vl \cup \vr \cup \vm$. Gray areas do not contain any vertex, except, possibly, on the horizontal and vertical lines through $v$. Only edges incident to $v$ are shown or partially shown. (a) $u\in \vlr$. (b) $u\in \vl$. (c) $u\in \vr$. (d) $u\in \vm$.}
    \label{fig:span2-placement14}
\end{figure}

\begin{enumerate}
\item $\mathbf{u\in \vlr}$. We set $x(u)<x(v)$ and $y(u)<y(v)$, where $x(v)-x(u)$ and $y(v)-y(u)$ are sufficiently small so that there is no vertex $w$ such that $x(u)\le x(w)< x(v)$ or with $y(u)\le y(w)< y(v)$. See Figure \ref{fig:span2-placement1} and the placement of $u=17$ with $v=24$ in \cref{fig:level-dom-b}.

\item $\mathbf{u\in \vl}$. We set $y(u)=y(v)$ and $x(u)<x(v)$, where $x(v)-x(u)$ is sufficiently small so that there is no vertex $w$ such that $x(u)\le x(w)< x(v)$. See Figure \ref{fig:span2-placement2} and the placement of $u=9$ with $v=22$, or of $u=22$ with $v=26$ in \cref{fig:level-dom-b}.

\item $\mathbf{u\in \vr}$. We set $x(u)=x(v)$ and we set $y(u)<y(\tilde{u})$, where $y(\tilde{u})-y(u)$ is sufficiently small so that there is no vertex $w$ such that $y(u)\le y(w)< y(\tilde{u})$. See Figure \ref{fig:span2-placement3} and the placement of $u=14$ with $\tilde{u}=13$ and $v=22$, or of $u=21$ with $\tilde{u}=20$ and $v=25$ in \cref{fig:level-dom-b}.

\item $\mathbf{u\in \vm}.$ Note that $u$ and $\tilde{u}$ belong to the same level, that $(\tilde{u},v)$ is an edge of $H_{\tilde u}$, and that $\tilde{u}\in \vl\cup \vm \cup \vtwo$. We set $x(u)$ and $y(u)$ such that $x(w)<x(u)<x(v)$, for any vertex $w$ in $H_{\tilde u}$ such that $x(w)<x(v)$ in $\Gamma_{\tilde u}$, and $y(w)<y(u)<y(\tilde{u})$, for any vertex $w\neq \tilde u$ in $H_{\tilde u}$ such that $y(w)<y(\tilde{u})$ in $\Gamma_{\tilde u}$. See Figure \ref{fig:span2-placement4} and the placement of $u=12$ with $\tilde{u}=11$ and $v=22$, or of $u=20$ with $\tilde{u}=19$ and $v=25$, or of $u=24$ with $\tilde{u}=23$ and $v=27$ in \cref{fig:level-dom-b}.

\item $\mathbf{u\in \vtwo}$.  We set $x(u)=x(v)$ and $y(u)=y(v')$. See Figure \ref{fig:span2-placement5} and the placement of $u=1$ with $v=2$ and $v'=8$, or of $u=5$ with $v=13$ and $v'=17$ in \cref{fig:level-dom-b}.

\begin{figure}[ht]
    \centering
    \includegraphics[page=6, scale=0.75]{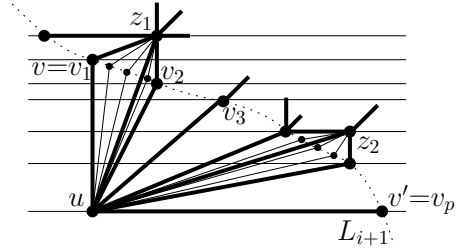}
    \caption{Placement of $u$ if $u\in \vtwo$. Only the edges incident to $u$ and some of the edges incident to the neighbors of $u$ are shown or partially shown. The vertices represented by small disks are those in $\mathcal P^{uz_1}\cup \mathcal Q^{uz_1}$ and $\mathcal P^{uz_2}\cup \mathcal Q^{uz_2}$.}
    \label{fig:span2-placement5}
\end{figure}

If there exist transitive edges outgoing from $u$, we might need to change the position of some already placed vertices in $\vm\cap L_{i+1}$. Namely, for each transitive edge $(u,z)$, if there exists a vertex $w$ in $\mathcal P^{uz}\cup \mathcal Q^{uz}$ along the straight-line segment connecting~$u$ and~$z$, then we move~$w$ to the right. This movement is sufficiently small so that there is no vertex in $\Gamma_u$, other than $w$, whose $x$-coordinate is in the interval $[x_1,x_2]$, where $x_1$ and $x_2$ are the $x$-coordinates of $w$ before and after the movement.
\end{enumerate}

\smallskip \noindent
\textbf{Correctness:} We now prove that $\Gamma_u$ is an {\sc aep} planar straight-line dominance drawing of~$H_u$. This statement, when $u=s$ is the source of $H$, implies the theorem. The proof is by induction on the position of vertex $u$ in $\vdash$. In the base case, $u=t$ is the first vertex in $\vdash$, hence $V(H_t)=\{t\}$, and the statement is trivially true.


For the inductive case, we use the same notation as in the algorithm's description. In particular, $(u,v)$ and $(u,v')$ are the leftmost and rightmost outgoing edges of $u$. Suppose that $\Gamma_{\tilde{u}}$ is an {\sc aep} planar straight-line dominance drawing of~$H_{\tilde{u}}$. We prove that~$\Gamma_u$, constructed as described above, is an {\sc aep} planar straight-line dominance drawing of~$H_u$. The proof distinguishes 5 cases, according to the case of the algorithm which is applied to construct~$\Gamma_u$. In each case, we prove that $\Gamma_u$ is planar and almost-embedding-preserving (\textbf{P}), and that it is a dominance drawing (\textbf{D}). We make the following useful observation. Since~$v\in L_{i+1}$, it follows that every vertex $w$ that lies in the third quadrant of $v$ in $\Gamma_{\tilde{u}}$ (and hence in $\Gamma_{u}$) is a vertex in $L_i$ such that the edge $(w,v)$ exists. This comes from the fact that no vertex in $L_j$ with $j<i$ belongs to $H_{\tilde{u}}$ and that every vertex in the third quadrant of $v$ in~$\Gamma_{\tilde{u}}$ is a predecessor of $v$.


\begin{enumerate}
\item First, suppose that $\mathbf{u\in \vlr}$. 

(\textbf{P}) In order to prove the planarity of $\Gamma_u$, note that $u$ is in the third quadrant of its only successor $v$, by construction. Since $\mathbf{u\in \vlr}$, it follows that $v$ has no incoming edge other than $(u,v)$ in~$H_u$. Since the third quadrant of $v$ in $\Gamma_u$ contains no vertex other than $u$ and since all the edges have non-negative slope in $\Gamma_u$, it follows that the edge $(u,v)$ does not participate in any crossings, hence $\Gamma_u$ is planar, given that $\Gamma_{\tilde{u}}$ is planar. Trivially, $\Gamma_u$ is almost-embedding-preserving, since $\Gamma_{\tilde{u}}$ is almost-embedding-preserving and $(u,v)$ is the only incoming edge of $v$ in $H_u$.

(\textbf{D}) A vertex $w$ is a successor of $u$ if and only if it is a successor of $v$; also, $u$ has no predecessors in $H_u$ and $v$ has no predecessors in $H_u$ other than $u$. By construction, there is no vertex $w$ in $\Gamma_u$ such that $x(u)\le x(w)< x(v)$ or $y(u)\le y(w)< y(v)$, hence the dominance relationship between $u$ and a vertex $w\neq v$ of $H_u$ is the same as the one between $v$ and $w$. Since $\Gamma_{\tilde{u}}$ is a dominance drawing, it follows that $\Gamma_u$ is a dominance drawing, as well.


\item Second, suppose that $\mathbf{u\in \vl}$. 

(\textbf{P}) The proof that $\Gamma_u$ is planar and almost-embedding-preserving is the same as in the case in which $u\in \vlr$. Note that, in this case,~$v$ might have incoming edges other than $(u,v)$ in~$H$, but not in~$H_u$, by the definitions of~$\vl$ and~$\vdash$. 

%
%
%

(\textbf{D}) The argument that proves that $\Gamma_u$ is a dominance drawing is the same as in the case $u\in \vlr$.

%


\item Third, suppose that $\mathbf{u\in \vr}$. 

(\textbf{P}) As in the previous cases, $(u,v)$ does not cross any edge that has no end-vertex in the third quadrant of $v$, given that all the edges have non-negative slope in $\Gamma_u$. However, in this case, $\Gamma_u$ contains vertices other than $u$ in the third quadrant of~$v$, namely all and only the vertices $w$ in $L_i\cap V(H_u)$ such that there exists an edge $(w,v)$. Notice that the edges incoming into $v$ appear in $\Gamma_{\tilde{u}}$ in the same clockwise order around~$v$ as in $\mathcal E_{H_{\tilde{u}}}$, since~$v$ is not the sink of a transitive edge in $H_{\tilde{u}}$, given that it belongs to~$L_{i+1}$ and no vertex of~$L_{i-1}$ is in~$H_{\tilde{u}}$. Among the vertices incident to edges incoming into $v$, there is a vertex~$w^*$ such that~$(w^*,v)$ is the leftmost edge entering~$v$. By definition,~$w^*$ belongs either to~$\vl$ or to~$\vtwo$. In both cases, we have that~$y(w^*)=y(v)$, by construction. Since edges have non-negative slope and~$(u,v)$ has positive slope, no edge outgoing from~$w^*$ might cross~$(u,v)$ in~$\Gamma_u$. Every vertex~$w$ in $L_i$ such that there exists an edge $(w,v)$ and such that $w\notin \{u,w^*\}$ belongs to $\vm$, hence its only outgoing edge is $(w,v)$. By construction $x(w)<x(v)$, hence $(w,v)$ does not cross $(u,v)$ in~$\Gamma_u$, given that $x(u)=x(v)$. It follows that the edge $(u,v)$ does not participate in any crossings, which together with the planarity of $\Gamma_{\tilde{u}}$ implies that $\Gamma_u$ is planar. Also, $\Gamma_u$ is almost-embedding-preserving, given that $\Gamma_{\tilde{u}}$ is almost-embedding-preserving and given that $(u,v)$ is correctly embedded as the rightmost edge incoming into $v$, since $x(u)=x(v)$ and $x(w)<x(v)$, for every vertex $w\neq u$ such that the edge $(w,v)$ belongs to $H_u$.
%
%

(\textbf{D}) Since $\Gamma_{\tilde{u}}$ is a dominance drawing and $u$ has a unique incident edge in $H_u$, in order to prove that $\Gamma_u$ is a dominance drawing it suffices to prove that $u$ is in the correct dominance relationship with every vertex $w$ of $H_u$. By construction, $u$ is in the correct dominance relationship with $\tilde u$, given that such vertices are incomparable and $x(\tilde{u})<x(v)=x(u)$ and $y(u)<y(\tilde{u})$, by construction. Also, consider any vertex $w\notin \{u,\tilde u\}$. If the reachability between $u$ and $w$ is the same as the one between $\tilde u$ and $w$, then $u$ is in the correct dominance relationship with $w$, given that the relative position of $u$ and $w$ is the same as the one of $\tilde{u}$ and $w$, by construction. The only case in which the reachability between $u$ and $w$ is not the same as the one between $\tilde u$ and $w$ is the one in which: (i) $\tilde{u}\in \vtwo$ (hence, $(\tilde{u},v)$ is the leftmost edge incoming into $v$); (ii) $w$ is a successor of $\tilde{u}$; and (iii) $w$ is not a successor of $v$. In fact, in this case $w$ is a successor of $\tilde{u}$, while it is incomparable to $u$. Since $w$ and $v$ are incomparable and since $(\tilde{u},v)$ is the rightmost edge outgoing from $\tilde{u}$, it follows that $x(v)>x(w)$ and $y(v)<y(w)$ in $\Gamma_{\tilde{u}}$ and hence in $\Gamma_u$. Hence, by construction, we have $x(u)=x(v)>x(w)$ and $y(u)<y(v)<y(w)$, hence $u$ is again in the correct dominance relationship with $w$.

\item Fourth, suppose that $\mathbf{u\in \vm}$. 

(\textbf{P}) The proof that~$\Gamma_u$ is planar and almost-embedding-preserving is almost the same as the one for the case in which $u\in \vr$. The only difference is that the reason why $(u,v)$ does not cross a different edge $(w,v)$ is not that $x(u)=x(v)$, but rather that $x(w)\leq x(\tilde{u})<x(u)<x(v)$ and $y(u)<y(\tilde{u})\leq y(w)\leq y(v)$, by construction.



(\textbf{D}) The proof that $\Gamma_u$ is a dominance drawing is the same as in the case in which $u\in \vr$, except that, when $w$ is a successor of $\tilde{u}$ and is incomparable to $u$, we have $x(u)>x(w)$ not because $x(u)=x(v)$, but rather directly by construction.

\item Finally, suppose that $\mathbf{u\in \vtwo}$. 

(\textbf{P}) In order to prove the planarity of $\Gamma_u$, we partition the edges of $H_u$ into four sets. The edges incident to a vertex in $\mathcal P^{uz}\cup \mathcal Q^{uz}$, for some transitive edge $(u,z)$, are called \emph{dangling edges}. The second and third sets of edges are composed of those edges that are not dangling, that are incident to $u$, and that are respectively non-transitive and transitive; these are called \emph{non-transitive $u$-edges} and \emph{transitive $u$-edges}, respectively. Finally, all other edges are called \emph{far edges}. Since $\Gamma_{\tilde u}$ is planar, we only need to prove that the non-far edges do not cross each other and do not cross any far edge. We break down this proof in several parts.

\begin{itemize}
\item First, we prove that non-transitive $u$-edges do not cross each other and that transitive $u$-edges do not cross each other. Let $(u,v=v_1), (u,v_2), \dots, (u,v'=v_p)$ be the clockwise order of the non-transitive $u$-edges in $\mathcal E_{H_u}$. Since $v_1,v_2,\dots,v_p$ belong to $L_{i+1}$, they are incomparable in $H_{\tilde u}$ and $H_u$. Since $\Gamma_{\tilde{u}}$ is almost-embedding-preserving, we have $x(v_1)<x(v_2)<\dots<x(v_p)$ and $y(v_1)>y(v_2)>\dots>y(v_p)$. By construction, we have $x(u)=x(v_1)<x(v_2)<\dots<x(v_p)$ and $y(v_1)>y(v_2)>\dots>y(v_p)=y(u)$, hence no two non-transitive $u$-edges cross each other and the clockwise order of such edges in $\Gamma_u$ is the same as in $H_u$. Similarly, let $(u,z_1),(u,z_2),\dots,(u,z_q)$ be the clockwise order of the transitive $u$-edges in $\mathcal E_{H_u}$. Since $z_1,z_2,\dots,z_q$ belong to~$L_{i+2}$, they are incomparable in $H_{\tilde u}$ and $H_u$. Since $\Gamma_{\tilde{u}}$ is almost-embedding-preserving, we have $x(z_1)<x(z_2)<\dots<x(z_q)$ and $y(z_1)>y(z_2)>\dots>y(z_q)$, hence $x(u)<x(z_1)<x(z_2)<\dots<x(z_q)$ and $y(z_1)>y(z_2)>\dots>y(z_q)=y(u)$, hence any two transitive $u$-edges do not cross each other and their clockwise order in $\Gamma_u$ is the same as in $H_u$.

\item Second, we prove that each non-transitive $u$-edge $(u,v_h)$  does not cross any far edge. If $2\leq h \leq p$, then the third quadrant of~$v_h$ does not contain any vertex other than $u$, since $(u,v_h)$ is the only incoming edge of $v_h$ in $H_u$ (in fact, the only incoming edge of~$v_h$ in $H$ if $2\leq h \leq p-1$, while $v_p=v'$ might have more incoming edges in $H$ that are not in $H_u$) and since any predecessor of $u$ in $H$ belongs to $L_j$ with $j<i$. Then $(u,v_h)$ does not cross any far edge since such an edge has non-negative slope. Finally, consider the edge $(u,v_1)=(u,v)$. Far edges only incident to vertices that are not in the third quadrant of $v$ do not cross $(u,v)$ since they have non-negative slope. Moreover, the only vertices in the third quadrant of $v$ are the vertices $w$ in $L_i$ such that the edge $(w,v)$ exists. For each such a vertex~$w$, the edge~$(w,v)$ does not cross~$(u,v)$ as it lies to the left of it. Also, $w$ can have other incident edges only if $y(w)=y(v)$, and in this case such edges do not intersect $(u,v)$ since they have non-negative slope. 

\item Third, we prove that each transitive $u$-edge $(u,z)$ does not cross any far edge and any non-transitive $u$-edge.  We define two vertices ``associated'' with $(u,z)$. Let $(u,w_z)$ be the edge that follows $(u,p^{uz}_1)$ (or that follows $(u,z)$ if $\mathcal P^{uz}$ is empty) in the counter-clockwise order of the edges outgoing from $u$ in $H_u$; analogously, let $(u,w'_z)$ be the edge that follows $(u,q^{uz}_c)$ (or that follows $(u,z)$ if $\mathcal Q^{uz}$ is empty) in the clockwise order of the edges outgoing from $u$ in $H_u$. For example, in \cref{fig:level-dom-a}, for the transitive edge~$(2,22)$, we have $w_z=9$ and $w'_z=12$, while for the transitive edge $(8,25)$, we have $w_z=18$ and $w'_z=21$. Note that $w_z$ exists, belongs to $L_{i+1}$, and is incident to  the edge $(w_z,z)$; likewise, $w'_z$ exists, belongs to $L_{i+1}$, and is incident to the edge $(w'_z,z)$. We prove the statement for $w_z$, the proof for $w'_z$ is analogous. If $\mathcal P^{uz}$ is empty, then~$w_z$ is the vertex that forms a triangular face with~$(u,z)$ to the left of~$(u,z)$ and the statement directly follows. Otherwise, the statement follows from the assumption that $(p^{uz}_1,z)$ is neither the leftmost edge incoming into $z$ nor the leftmost edge outgoing from $u$, which in fact implies that the edges $(u,w_z)$ and $(w_z,z)$ exist, and thus $w_z$ belongs to~$L_{i+1}$. Also, observe that $(u,w_z)$ is the leftmost edge outgoing from $u$, or $(w_z,z)$ is the leftmost edge incoming into $z$, or both.

We already observed that the clockwise order of the transitive $u$-edges in $\Gamma_u$ and the clockwise order of the transitive $u$-edges in $\Gamma_u$ are the same as in $\mathcal E_{H_u}$. Hence, in order to prove that transitive $u$-edges do not cross non-transitive $u$-edges and that their global clockwise order is the same as in $\mathcal E_{H_u}$, we only need to prove that such orders merge correctly. In order to prove that, it suffices to prove that $(u,w'_z)$ follows $(u,z)$ in clockwise order around $u$ in $\Gamma_u$ and that $(u,w_z)$ precedes $(u,z)$ in clockwise order around $u$ in $\Gamma_u$. We prove the former statement, as the latter has a similar proof. The proof distinguishes two cases. If the edge $(w'_z,z)$ exists then, since $(u,w'_z)$ is not a dangling edge, we have that: (i) $(w'_z,z)$ is the rightmost edge incoming into~$z$, which implies $x(w'_z)=x(z)$ and $y(w'_z)<y(z)$, or (ii) $w'_z=v'$, which implies $x(w'_z)\leq x(z)$ and $y(w'_z)=y(u)<y(z)$. Hence, $(u,w'_z)$ follows $(u,z)$ in clockwise order around~$u$ in~$\Gamma_u$. If the edge $(w'_z,z)$ does not exist, then $w'_z$ and $z$ are incomparable, hence we have $x(z)<x(w'_z)$ and $y(z)>y(w'_z)$ since $\Gamma_{\tilde u}$ is almost-embedding-preserving, and again $(u,w'_z)$ follows $(u,z)$ in clockwise order around $u$ in $\Gamma_u$. 


We next prove that each transitive $u$-edge $(u,z)$ does not cross far edges. Observe that the quadrilateral $R_{uz}=(u,w_z,z,w'_z)$ is strictly-convex in $\Gamma_u$. Indeed, the angles at $u$ and $z$ are at most $90^\circ$ because edge slopes are non-negative, while the angles at $w_z$ and $w'_z$ are smaller than $180^\circ$ because $(u,w_z)$ and $(u,w'_z)$ respectively precede and follow $(u,z)$ in clockwise order around $u$ in $\Gamma_u$. Hence, the edge $(u,z)$ lies inside~$R_{uz}$. Also, the end-vertices of every far edge lie outside (or on the boundary of) $R_{uz}$, given that the only vertices of $H_u$ inside $R_{uz}$ are those vertices in $L_{i+1}$ that have only~$u$ and~$z$ as neighbors. Such vertices are those in $\mathcal P^{uz}$ and $\mathcal Q^{uz}$, hence they are only incident to dangling edges. Moreover, a crossing between $(u,z)$ and an edge $e_{\textrm{far}}$ whose end-vertices are both not inside~$R_{uz}$ would imply the existence of a crossing in $\Gamma_{\tilde{u}}$ or of a crossing between $e_{\textrm{far}}$ and a non-transitive $u$-edge on the boundary of~$R_{uz}$, which we already ruled out. It follows that $(u,z)$ does not cross any far edge. 

\item Finally, recall that dangling edges are those incident to vertices in $\mathcal P^{uz}\cup \mathcal Q^{uz}$, for some transitive $u$-edge $(u,z)$. Since the vertices in $\mathcal P^{uz}\cup \mathcal Q^{uz}\cup \{w_z,w'_z\}$ are pairwise incomparable and since $\Gamma_{\tilde u}$ is an {\sc aep} planar straight-line dominance drawing of $H_{\tilde u}$, it follows that $x(w_z)<x(p^{uz}_1)<x(p^{uz}_2)<\dots<x(p^{uz}_b)<x(q^{uz}_1)<x(q^{uz}_2)<x(q^{uz}_c)<x(w'_z)$ and $y(w_z)>y(p^{uz}_1)>y(p^{uz}_2)>\dots>y(p^{uz}_b)>y(q^{uz}_1)>y(q^{uz}_2)>y(q^{uz}_c)>y(w'_z)$. Since $x(u)\leq x(w_z) <x(w'_z)\leq x(z)$ and $y(z)\geq y(w_z) >y(w'_z)\geq y(u)$, it follows that no two dangling edges incident to vertices in $\mathcal P^{uz}\cup \mathcal Q^{uz}$ cross each other and that they all lie inside $R_{uz}$. This implies that no dangling edge incident to a vertex in $\mathcal P^{uz}\cup \mathcal Q^{uz}$ crosses any non-transitive $u$-edge, or any transitive $u$-edge different from~$(u,z)$, or any far edge, or any dangling edge not incident to a vertex in $\mathcal P^{uz}\cup \mathcal Q^{uz}$, as the end-vertices of all such edges lie outside or on the boundary of $R_{uz}$. Note that a dangling edge incident to a vertex $w$ in $\mathcal P^{uz}\cup \mathcal Q^{uz}$ might cross the transitive $u$-edge~$(u,z)$; however, when this happens, the construction modifies the position of $w$ so that the crossing is avoided.
\end{itemize}

This concludes the proof of planarity of $\Gamma_u$. We now prove that $\Gamma_u$ is almost-embedding-preserving. Since $\Gamma_{\tilde u}$ respects the embedding of $H_{\tilde u}$, we only need to deal with the clockwise order of the edges incident to $u$ and to each vertex adjacent to $u$ in $\Gamma_u$. As argued in the proof of planarity, the clockwise order of the edges incident to $u$ in $\Gamma_u$ is the same as in~$H_u$, with one exception. Namely, for each transitive edge $(u,z)$, the dangling edges incident to $u$ and to vertices in $\mathcal P^{uz}\cup \mathcal Q^{uz}$ are in the same order around~$u$ as in $\mathcal E_{H_u}$, they are correctly inside $R_{uz}$, and the edge $(u,z)$ is also correctly inside $R_{uz}$. However, the edge $(u,z)$ splits the sequence of dangling edges incident to $u$ and incident to vertices in $\mathcal P^{uz}\cup \mathcal Q^{uz}$ in $\Gamma_u$ into two sequences which do not necessarily coincide with the ones in $\mathcal E_{H_u}$. This is allowed by, and is in fact what motivates, the definition of almost-embedding-preserving drawing. Similarly, for each transitive edge $(u,z)$, the dangling edges incident to $z$ are in the same order around $z$ as in $\mathcal E_{H_u}$, however, the edge~$(u,z)$ splits them into two sequences which do not necessarily coincide with the ones in~$\mathcal E_{H_u}$. The clockwise order of the edges incident to each vertex in $\mathcal P^{uz}\cup \mathcal Q^{uz}$, for some transitive edge $(u,z)$, is the same as in~$\mathcal E_{H_u}$, since such vertices have only one incoming and one outgoing edge. Finally, consider each non-transitive $u$-edge $(u,v_h)$, for some $1\leq h\leq p$. If $2\leq h\leq p$, then $(u,v_h)$ is the only edge incoming into $v_h$ in $H_u$; this, together with the fact that $\Gamma_{\tilde u}$ is almost-embedding-preserving, implies that the clockwise order of the edges incident to $v_h$ in $\Gamma_{u}$ is the same as in~$\mathcal E_{H_u}$ (except for the transitive edges outgoing from $v_h$, which might fit into the corresponding sequences of dangling edges differently from~$\mathcal E_{H_u}$). Also, since $x(u)=x(v_1)$, it follows that $(u,v_1)$ is the rightmost edge incoming into $v_1$, as in~$\mathcal E_{H_u}$. Hence, $\Gamma_u$ is almost-embedding-preserving.




(\textbf{D}) Since $\Gamma_{\tilde{u}}$ is a dominance drawing and $u$ has no incoming edge in $H_u$, in order to prove that $\Gamma_u$ is a dominance drawing it suffices to prove that $u$ is in the correct dominance relationship with every vertex $w$ of $H_u$. Since $x(u)=x(v_1)<x(v_2)<\dots<x(v_p)$ and $y(v_1)>y(v_2)>\dots>y(v_p)=y(u)$, we have that $u$ is in the correct dominance relationship in $\Gamma_u$ with all of its successors in $H_u$. Every vertex $w$ of $H_u$ that is not a successor of $u$ is incomparable with $v_p$ and incomparable or a predecessor of $v_1$; this lack of symmetry is due to the fact that $v_p$ does not have predecessors other than $u$ in $H_u$, while $v_1$ might. Since $\Gamma_{\tilde{u}}$ is almost-embedding-preserving, it follows that $w$ is either: (i) below and to the right of $v_p$, or (ii) above and to the left of $v_1$, or (iii) above~$u$, not above $v_1$, and to the left of $v_1$. If $w$ is below and to the right of $v_p$, then we have $x(u)=x(v_1)<x(v_p)<x(w)$ and $y(u)=y(v_p)>y(w)$, hence $u$ is in the correct dominance relationship with $w$. Similarly, if $w$ is above and to the left of~$v_1$, then we have $x(u)=x(v_1)>x(w)$ and $y(u)=y(v_p)<y(v_1)<y(w)$, hence $u$ is in the correct dominance relationship with $w$. Finally, if $w$ is above $u$, not above~$v_1$, and to the left of~$v_1$, the proof that $y(u)=y(v_p)<y(w)$ exploits the fact that $v_p$ and  $w$ are incomparable and that $x(v_p)>x(v_1)>x(w)$.
\end{enumerate}


This concludes the proof that $\Gamma_u$ is an almost-embedding-preserving planar straight-line dominance drawing of $H_u$ and thus the proof of the theorem.
\end{proof}

In the drawing $\Gamma$ constructed in Theorem \ref{th:span2}, a vertex $v$ which is a successor of a vertex $u$ might share the $x$- or $y$-coordinate with $u$. While this is compatible with the definition of dominance drawing, which requires that $x(u)\leq x(v)$ and $y(u)\leq y(v)$, see~\cite{DBLP:journals/dcg/BattistaTT92}, Lemma \ref{le:strict} shows that $\Gamma$ can be modified so that it remains a planar straight-line dominance drawing and no two vertices have the same $x$- or $y$-coordinates.

\section{Conclusions and Open Problems}\label{se:conclusions}

In this paper, we tackled the following problem: Does every $st$-plane graph admit a planar straight-line dominance drawing? While we were not able to solve this question in its generality, our research advanced the state of the art in many directions. 

First, we have provided concrete evidence for the difficulty in constructing planar straight-line dominance drawings. Most notably, we proved that planar straight-line dominance drawings with prescribed $y$-coordinates do not always exist. Our research in this direction indicates that, if an algorithm that constructs a planar straight-line dominance drawing of every $st$-plane graph exists, then it should use substantially different ideas than known algorithms for the construction of upward planar straight-line drawings.

Second, we have described several classes of $st$-plane graphs that admit a planar straight-line dominance drawing. A difficult benchmark here is, in our opinion, provided by the $st$-plane $3$-trees. Hence, we believe it would be a major milestone to understand whether these graphs always admit planar straight-line dominance drawings.

We conclude with one more open problem. Does every (undirected) maximal planar graph admit a planar straight-line dominance drawing? That is, does it admit an $st$-orientation such that the resulting $st$-plane graph has a planar straight-line dominance drawing? 

\paragraph*{\bf Acknowledgments} This research initiated at the First Summer Workshop on Graph Drawing (SWGD 2021). Thanks to the organizers and the participants for an inspiring atmosphere.
\bibliography{Literature}

\end{document}